\theoremstyle{plain}
\newtheorem{theorem}{Theorem}[section]
\newtheorem{lemma}[theorem]{Lemma}
\newtheorem{corollary}[theorem]{Corollary}
\newtheorem{definition}[theorem]{Definition}
\newtheorem{proposition}[theorem]{Proposition}
\newtheorem{example}[theorem]{Example}
\newtheorem{remark}[theorem]{Remark}
\newcommand{\rmv}[1]{}
\newcommand{\F}{\mathbb F}
\begin{document}


\title{Polar decreasing monomial-Cartesian codes}

\author{Eduardo Camps}
\address[Eduardo Camps]{Escuela Superior de F\'isica y Matem\'aticas \\ Instituto Polit\'ecnico Nacional\\ Mexico City, Mexico}
\email{camps@esfm.ipn.mx}
\author{Hiram H. L\'opez}
\address[Hiram H. L\'opez]{Department of Mathematics and Statistics\\ Cleveland State University\\ Cleveland, OH USA}
\email{h.lopezvaldez@csuohio.edu}
\author{Gretchen L. Matthews}
\address[Gretchen L. Matthews]{Department of Mathematics\\ Virginia Tech\\ Blacksburg, VA USA}
\email{gmatthews@vt.edu}
\thanks{The first and fourth author are partially supported by SIP-IPN, project 20195717, and CONACyT. The third author was supported by NSF DMS-1855136.}
\author{Eliseo Sarmiento}
\address[Eliseo Sarmiento]{Escuela Superior de F\'isica y Matem\'aticas \\ Instituto Polit\'ecnico Nacional\\ Mexico City, Mexico}
\email{esarmiento@ipn.mx}
\thanks{}
\keywords{Cartesian codes, monomial codes, monomial-Cartesian codes, decreasing codes, polar codes}
\subjclass[2010]{Primary 11T71; Secondary 14G50}

\begin{abstract}
We  prove that families of polar codes with multiple kernels over certain symmetric channels can be viewed as polar decreasing monomial-Cartesian codes, offering a unified treatment for such codes, over any finite field. We define decreasing monomial-Cartesian codes as the evaluation of a set of monomials closed under divisibility over a Cartesian product. Polar decreasing monomial-Cartesian codes are  decreasing monomial-Cartesian codes whose sets of monomials are closed respect a partial order inspired by the recent work of Bardet, Dragoi, Otmani, and Tillich [``Algebraic properties of polar codes from a new polynomial formalism,'' 2016 IEEE International Symposium on Information Theory (ISIT)]. Extending the main theorem of Mori and Tanaka [``Source and Channel Polarization Over Finite Fields and Reed-Solomon Matrices," in IEEE Transactions on Information Theory, vol. 60, no. 5, pp. 2720--2736, May 2014], we prove that any sequence of invertible matrices over an arbitrary field satisfying certain conditions polarizes any symmetric over the field channel. In addition, we prove that the dual of a decreasing monomial-Cartesian code is monomially equivalent to a decreasing monomial-Cartesian code. Defining the minimal generating set for a set of monomials, we use it to describe the length, dimension and  minimum distance of a decreasing monomial-Cartesian code.

\rmv{Extending the main theorem of Mori and Tanaka [Source and channel polarization over finite fields and Reed-Solomon matrices. IEEE Transactions on Information Theory, {\bf {60}} (2014)], we prove that any sequence of invertible matrices over an arbitrary field satisfying certain conditions polarizes any symmetric over the field channel. We use these sequences of matrices to define a larger family of polar codes. The kernels arise from evaluation codes defined by a set of monomials which is closed under divisibility and evaluated over a Cartesian product, called decreasing monomial-Cartesian codes. We define a polar decreasing monomial-Cartesian code as given by a decreasing monomial-Cartesian code whose set of monomials is closed respect a 
partial order on a set of monomials, inspired by recent work of Bardet, Dragoi, Otmani, and Tillich [Algebraic properties of polar codes from a new polynomial formalism, In 2016 IEEE International Symposium on Information Theory (2016, July)]. }
\end{abstract}

\maketitle

\section{Introduction}\label{19.08.16}

Polar codes, introduced in 2009 in the seminal paper~\cite{Arikan} by Arikan, are the first class of provably capacity achieving codes for symmetric binary-input memoryless channels with explicit construction as well as efficient encoding and decoding. This breakthrough generated a flurry of activity on polar codes, as described below. Polar codes are now attracting increased attention as they are adopted in 5th generation wireless systems (5G) standardization process of the 3rd generation
partnership project (3GPP); for an overview, see for instance, \cite{5g1, 5g2}.

Originally they were constructed with Arikan's kernel, which is given by
\[G_A=\begin{bmatrix} 1&0\\ 1&1\end{bmatrix}.\]
The kernel is used to create $N$ synthetic channels from $N$ copies of the channel in a recursive fashion, so that some of the new channels have enhanced reliability while others are inferior. In the limit, as $N \rightarrow \infty$, each channel becomes either noiseless or pure noise, which is the so-called polarization phenomenon.  For an $(N,K)$ polar code, communication takes places over the $K$ most reliable channels, taking the corresponding codeword coordinates to be part of the information set while the remaining positions are frozen bits and not used to transfer information. 

Polar codes were generalized to arbitrary discrete memoryless channels by \c{S}a\c{s}o\u{g}lu, Telatar and Arikan \cite{STA}, and Korada, \c{S}a\c{s}o\u{g}lu, and Urbanke considered larger binary matrices as kernels and considered the speed of polarization by introducing a quantity called the exponent \cite{KSU}. Polarization over nonbinary alphabets was studied by  \c{S}a\c{s}o\u{g}lu  \cite{sas} as were polar codes over arbitrary finite fields by Mori and Tanaka \cite{scp} (see also \cite{erpol} and \cite{cparb}). Tal and Vardy pushed forward the applicability of polar codes with their introduction of a   successive-cancellation list decoder \cite{TV} (see also \cite{Sarkis}) and efficient constructions \cite{TV}. 

In this paper, we consider \textit{multikernel polar codes} where the kernel is formed using a sequence of matrices. The primary motivation for the  multikernel polarization process is the construction of polar codes of different lengths, other than $N=l^n$. Other  techniques, such as puncturing or shortening the original polar code (\cite{punc1}, \cite{punc2}, \cite{punc3}), have been employed to achieve this but with some disadvantages as augmenting the decoding complexity. Multikernel polar codes over the binary field were considered in \cite{mk2} and \cite{mk1}  where they give some conditions for a sequence of matrices to polarize a channel.  The paper is organized as follows. 

In Section \ref{19.12.03} we recover the definition of multikernel polarization given in \cite{mk1} with a slight difference, as well as define it for matrices and channels over non-binary fields. Taking the ideas of \cite{scp}, we focus on channel with a certain symmetry to describe when a sequence of square invertible matrices polarizes. This yields conditions which are easier to check than those given in \cite{mk2} for binary polar codes. Later in the paper, we delve into this setting to obtain polar decreasing monomial-Cartesian codes which arise from evaluation codes defined by monomials over finite fields (of any characteristic). 

Section  \ref{19.12.02} presents the decreasing monomial-Cartesian codes which are a particular class of evaluation codes. Evaluation codes form an important family of error-correcting codes, including Cartesian codes, algebraic geometry codes, and many variants finely tuned for specific applications, such as LCD codes, quantum codes, and locally recoverable codes \cite{lopez-matt-sopru}. In this paper, we consider evaluation codes formed by evaluating a set of monomials closed under divisibility at points in a Cartesian product. 
Decreasing monomial-Cartesian codes generalize Reed-Solmon and Reed-Muller codes, as we will see. In addition, they contain the family of decreasing monomial codes considered in \cite{Bardet}.  We will demonstrate that duals of such codes are of the same type, determine bases for them, and examine their classical parameters (meaning length, dimension, and minimum distance). This is in preparation for the application to polar codes in the next section. 

In Section \ref{polar_dec_section}, we consider polar codes whose kernels are decreasing monomial-Cartesian codes, calling these polar decreasing monomial-Cartesian codes. In \cite{Bardet}, the authors proved that polar codes constructed from $G_A$ are polar decreasing monomial-Cartesian codes over the binary field. We extend this result to prove in Theorem~\ref{19.12.04} that polar codes constructed from a sequence of Reed-Solomon matrices using Definition~\ref{19.12.07} are polar decreasing monomial-Cartesian codes, and that any symmetric over the field channel is polarized by this sequence of Reed-Solomon matrices, providing a unified framework for this family of polar codes. Naturally, this holds at the cost of reducing the family of channels over which we can work, given the required symmetric condition. Section \ref{conclusion} provides a conclusion to this work. 

We close this section with a bit of notation that will be useful in the remainder of this paper. We will use $K^*:=K \setminus \{ 0 \}$ to denote the multiplicative group of a field $K$. The set of $m \times n$ matrices over a field $K$ is denoted $K^{m \times n}$. Given $M \in K^{m \times n}$, $Row_iM$ denotes the $i^{th}$ row of $M$ and $Col_jM$ denotes its $j^{th}$ column. For more information about coding theory, we recommend~\cite{MacWilliams-Sloane,van-lint}. For algebraic concepts not described here, we suggest to the reader~\cite{monalg}. 

\section{Polar codes defined by sequences of invertible matrices} \label{19.12.03}
Let $\mathbb{F}_q$ be a finite field with $q$ elements. Consider a discrete memoryless channel (DMC) $W:\mathbb{F}_q\rightarrow\mathcal{Y}$ with transition probabilities $W(y|x)$, $y\in\mathcal{Y},\ x\in\mathbb{F}_q$. For a sequence of invertible matrices $\{T_i\}_{i=1}^\infty$ where $T_i \in \F_q^{n_i \times n_i}$, define $G'_m$ as
$$G'_m=T_1\otimes T_2\otimes\cdots\otimes T_m,$$ where $\otimes$ is for the Kronecker product and 
$$G_m=B_mG'_m$$
where $B_m$ is the permutation matrix that sends the row $\displaystyle j=k_m+\sum_{i=1}^{m-1}k_in_{i+1}$ to the row $\displaystyle  j'=k_1+\sum_{i=2}^m k_in_{i-1}$. Alternatively, we may define these matrices inductively, taking $G_1=T_1$ and for $m \geq 2$, $$G_m=\begin{bmatrix}
G_{m-1}\otimes Row_1T_m\\
G_{m-1}\otimes Row_2T_m\\
\vdots\\
G_{m-1}\otimes Row_{l_m}T_m\end{bmatrix}.$$ 

\begin{example}\rm
  Let $\alpha$ be a primitive element of $\mathbb{F}_4$. Over this field, take the next matrices:
$$T_1=\begin{bmatrix}
0&1&\alpha^2\\
0&1&\alpha\\
1&1&1\end{bmatrix},\hspace{.5cm}T_2=\begin{bmatrix}
0&1&1&1\\
0&1&\alpha^2&\alpha\\
0&1&\alpha&\alpha^2\\
1&1&1&1\end{bmatrix}$$
Then$$G'_2=\left[\begin{array}{cccccccccccc}0&     0&     0&     0&     0&     1&     1&     1&     0& \alpha^2&     \alpha^2&     \alpha^2\\
     0&     0&     0&     0&     0&     1&    \alpha^2&     \alpha&     0&     \alpha^2&     \alpha&     1\\
     0&     0&     0&     0&     0&     1&     \alpha&     \alpha^2&     0&     \alpha^2&     1&     \alpha\\
     0&     0&     0&     0&     1&     1&     1&     1&     \alpha^2&     \alpha^2&     \alpha^2&
     \alpha^2\\
     0&     0&     0&     0&     0&     1&    1&     1&     0&     \alpha&     \alpha&     \alpha\\
     0&     0&     0&     0&     0&     1&     \alpha^2&     \alpha&     0&     \alpha&     1&    \alpha^2\\
     0&     0&     0&     0&     0&     1&     \alpha&     \alpha^2&     0&     \alpha&     \alpha^2&     1\\
     0&     0&     0&     0&     1&     1&     1&     1&     \alpha&     \alpha&     \alpha&     \alpha\\
     0&     1&     1&     1&     0&     1&     1&     1&     0&     1&     1&     1\\
     0&     1&     \alpha^2&     \alpha&     0&     1&     \alpha^2&     \alpha&     0&     1&     \alpha^2&     \alpha\\
     0&     1&     \alpha&     \alpha^2&     0&     1&     \alpha&     \alpha^2&     0&     1&     \alpha&     \alpha^2\\
     1&     1&     1&     1&     1&     1&     1&     1&     1&     1&1&1\\
     \end{array}\right],$$
     \noindent and
     $$B_2=\left[\begin{array}{ccccccccccccc}
     1&      0&      0&      0&      0&      0&      0&      0&      0&      0&      0&      0\\      0&0 & 0&      0&      1&      0&      0&      0&      0&      0&      0&      0\\      0&      0&      0&      0&      0&      0&      0&      0&      1&      0&      0&      0\\      0&      1&      0&      0&      0&      0&      0&      0&      0&      0&      0&      0\\      0&      0&      0&      0&      0&      1&      0&      0&      0&      0&      0&      0\\      0&      0&      0&      0&      0&      0&      0&      0&      0&      1&      0&      0\\      0&      0&      1&      0&      0&      0&      0&      0&      0&      0&      0&      0\\      0&      0&      0&      0&      0&      0&      1&      0&      0&      0&      0&      0\\      0&      0&      0&      0&      0&      0&      0&      0&      0&      0&      1&      0\\      0&      0&      0&      1&      0&      0&      0&      0&      0&      0&      0&      0\\      0&      0&      0&      0&      0&      0&      0&      1&      0&      0&      0&      0\\      0&      0&      0&      0&      0&      0&      0&      0&      0&      0&0&1\end{array}\right].$$
Therefore,
$$G_2=\left[\begin{array}{cccccccccccc}
0&      0&      0&      0&      0&      1&      1&      1&      0&      \alpha^2&      \alpha^2&      \alpha^2\\
0&      0&      0&      0&      0&      1&      1&      1&      0&      \alpha&      \alpha&      \alpha\\
0&      1&      1&      1&      0&      1&      1&      1&      0&      1&      1&      1\\
0&      0&      0&      0&      0&      1&      \alpha^2&      \alpha&      0&      \alpha^2&      \alpha&      1\\
0&      0&      0&      0&      0&      1&      \alpha^2&      \alpha&      0&      \alpha&      1&      \alpha^2\\
0&      1&      \alpha^2&      \alpha&      0&      1&      \alpha^2&      \alpha&      0&      1&      \alpha^2&      \alpha\\
0&      0&      0&      0&      0&      1&      \alpha&      \alpha^2&      0&      \alpha^2&      1&      \alpha\\
0&      0&      0&      0&      0&      1&      \alpha&      \alpha^2&      0&      \alpha&      \alpha^2&      1\\
0&      1&      \alpha&      \alpha^2&      0&      1&      \alpha&      \alpha^2&      0&      1&      \alpha&      \alpha^2\\
0&      0&      0&      0&      1&      1&      1&      1&      \alpha^2&      \alpha^2&      \alpha^2&      \alpha^2\\
0&      0&      0&      0&      1&      1&      1&      1&      \alpha&      \alpha&      \alpha&      \alpha\\
1&      1&      1&      1&      1&      1&      1&      1&      1&      1&      1&      1\end{array}\right].$$
\end{example}

Let us continue with the description of polarization. Starting from the channel $W$, we construct the following $n=\prod_{i=1}^m n_i$ channels:
$$W_m^{(i)}:\mathbb{F}_q\rightarrow\mathcal{Y}^N\times\mathbb{F}_q^{i-1}$$
$$W_m^{(i)}\left(y_1^n,u_1^{i-1}|u_i\right)=\frac{1}{q^{n-1}}\sum_{u_{i+1}^n\in\mathbb{F}_q^{n-1}}\prod_{j=1}^n W\left(y_j|u_1^n Col_j(G_m)_{\ast}\right).$$

As $n$ grows, some of the channels $W_m^{(i)}$ becomes noiseless. We measure this through the symmetric rate of the channel.

\begin{definition}\rm
    Let $W:\F_q \rightarrow\mathcal{Y}$ be a DMC channel. We define the symmetric rate of $W$ as
    $$I(W)=\frac{1}{q}\sum_{(x,y)\in\F_q\times\mathcal{Y}}W(y|x)\log_q\left(\frac{W(y|x)}{\frac{1}{q}\sum_{x\in\mathcal{X}}W(y|x)}\right).$$
\end{definition}

 \begin{definition}\rm
    Let $W:\mathbb{F}_q\rightarrow\mathcal{Y}$ be a DMC channel and $\{T_i\}_{i=1}^\infty$ be a sequence of invertible matrices over $\mathbb{F}_q$. We say that the sequence polarizes $W$ if for each $\delta>0$, we have 
    $$\lim_{m\rightarrow\infty}\frac{\left|\left\{i\in\{1,\ldots,\prod_{i=1}^m n_i\}\ |\ I\left(W_m^{(i)}\right)\in(1-\delta,1]\right\}\right|}{\prod_{i=1}^m n_i}=I(W),$$ and 
    $$\lim_{m\rightarrow\infty}\frac{\left|\left\{i\in\{1,\ldots,\prod_{i=1}^m n_i\}\ |\ I\left(W_m^{(i)}\right)\in[0,\delta)\right\}\right|}{\prod_{i=1}^m n_i}=1-I(W).$$
 \end{definition}
 Observe that when $T_i=G$ for all $i$, then we have the usual polarization process with kernel $G$. By taking $T_i=G_A$ for all $i,$ we have the original polar code defined by Arikan. The previous definition is similar to that given in \cite{mk1}, with the difference being we use the bit-reversal matrix $B_m$ and the field $\mathbb{F}_q$ instead of $\mathbb{F}_2.$

\begin{definition}\rm
    Let $W:\mathbb{F}_q\rightarrow\mathcal{Y}$ be a DMC channel. Then:
    \begin{itemize}
        \item[(a)]  $W$ is symmetric over the sum or additive symmetric if for each $a\in\mathbb{F}_q$ there is a permutation $\sigma_a$ of $\mathcal{Y}$ such that
        $$W(y|x)=W(\sigma_a(y)|x+a),\ \ \ \ \forall x\in\mathbb{F}_q,\ y\in\mathcal{Y}.$$
        
        \item[(b)]  $W$ is symmetric over the product if for each $a\in\mathbb{F}_q^\ast$ there is a permutation $\psi_a$ of $\mathcal{Y}$ such that
        $$W(y|x)=W(\psi_a(y)|ax),\ \ \ \ \forall x\in\mathbb{F}_q,\ y\in\mathcal{Y}.$$
        
        \item[(c)]  $W$ is symmetric over the field (SOF) if it is both symmetric over the sum and over the product.
    \end{itemize}
\end{definition}

Originally, polar codes were proposed over binary symmetric channels \cite{Arikan}. Later, in \cite{scp}, symmetry over the sum was used to guarantee that a family of matrices polarizes such channels. In \cite{Vardohus}, the authors employed symmetry over the field to describe up to certain degree the best channels $W_n^{(i)}$; these are those with greater symmetric rate. A channel with symmetry over both sums and products is called symmetric over the field (SOF), as detailed above.

\begin{example}\rm
    Let $0\leq p\leq 1$. The $q$-ary symmetric channel is defined as
    $$W_{Sq}:\mathbb{F}_q\rightarrow\mathbb{F}_q$$
    $$W_{Sq}(y|x)=(1-p)\delta(x,y)+\frac{p}{q},$$ where $\delta(x,y)=1$ if $x=y$ and $0$ otherwise. This is a SOF channel.
\end{example}

\begin{example}\rm
    The $q$-ary erasure channel for $0\leq p\leq 1$ is defined as $$W_{qE}:\mathbb{F}_q\rightarrow\mathbb{F}_q\cup\{\ast\}$$ with transition probabilities
    $$W_{qE}(y|x)=\begin{cases}
    1-p& y=\ast\\
    p& y=x\\
    0&\text{otherwise}\end{cases}.$$
    
    This is a SOF channel. The polar behavior of generalized Reed-Solomon codes over this channel was studied in \cite{erpol}.
\end{example}

When $W$ is an additive symmetric channel and $G$ and $G'$ are invertible matrices such that $G'G^{-1}$ is an upper-triangular matrix, then using either $G$ or $G'$ to polarize gives rise on channels $W_1^{(i)}$ with same symmetric rate. If $G$ polarizes, then $G'$ polarizes $W$. Making a column permutation of $G$ does not affect  the symmetric rate of the channels. If $P$ is a permutation matrix and $G$ polarizes, then so does $GP.$ This leads to the following definition.
\begin{definition}\rm
Let $G \in \F_q^{l \times l}$ be invertible. Let $V \in \F_q^{l \times l}$ be an upper-triangular invertible matrix and $P \in \F_q^{l \times l}$ be a permutation matrix. If $G'=VGP$ is a lower-triangular matrix with $1$'s in its diagonal, then $G'$ is called a {\bf standard form} of $G$.
\end{definition}

It is important to note that standard form is not unique. Over $\mathbb{F}_4$ with primitive element $\alpha$, both \[G'_1=\begin{bmatrix} 
\alpha&\alpha^2\\
0&\alpha\end{bmatrix}G=\begin{bmatrix}1&0\\ \alpha&1\end{bmatrix} \text{ and } 
G'_2=\begin{bmatrix}
\alpha^2&\alpha^2\\
0&1\end{bmatrix}G\begin{bmatrix} 0&1\\ 1&0\end{bmatrix}=\begin{bmatrix} 1&0\\ \alpha^2&1\end{bmatrix}\]
are standard forms of $G=\begin{bmatrix} 1&1\\ 1&\alpha^2\end{bmatrix}.$ The information given by the standard form of a sequence of invertible matrices is enough to determine if such a sequence polarizes an additive symmetric channel.

\begin{lemma}\cite[Theorem 14]{scp}\label{19.12.06}
    Let $p$ a prime such that $p|q$. The followings are equivalent for an invertible matrix $G \in \F_q^{l \times l}$ with a non identity standard form.
    
    \begin{itemize}
        \item[\rm {(a)}] Any additive symmetric channel is polarized by $G$.
        
        \item[\rm {(b)}] The field extension of $\F_p$ generated by the entries of $G'$, denoted $\mathbb{F}_p(G')$, is $\F_q$
         for any standard form $G'$ of $G$; that is,                 $$\mathbb{F}_p(G')=\mathbb{F}_q$$  for any standard form $G'$ of $G$.
        \item[\rm {(c)}] There is a standard form $G'$ of $G$ with $\mathbb{F}_p(G')=\mathbb{F}_q$. 
    \end{itemize}
\end{lemma}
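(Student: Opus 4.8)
The plan is to treat the three conditions via the cycle $(a)\Rightarrow(b)\Rightarrow(c)\Rightarrow(a)$, reducing everything to a single statement about a standard form. First I would invoke the invariance observations recorded just before the definition of standard form: for an additive symmetric channel, left multiplication by an invertible upper-triangular matrix $V$ and right multiplication by a permutation matrix $P$ leave the symmetric rates of the channels $W_m^{(i)}$ unchanged, so $G$ polarizes an additive symmetric channel if and only if $G'=VGP$ does. Consequently $(a)$ for $G$ is equivalent to $(a)$ holding for every standard form $G'$ simultaneously, and the whole lemma reduces to proving, for a standard form $G'$ (lower-triangular with $1$'s on the diagonal), the equivalence
\[ G' \text{ polarizes every additive symmetric channel} \iff \F_p(G')=\F_q. \]
The implication $(b)\Rightarrow(c)$ is immediate since a standard form exists by hypothesis, so the content is the sufficiency $(c)\Rightarrow(a)$ and the necessity $(a)\Rightarrow(b)$.

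For sufficiency I would run Arikan's martingale argument adapted to $\F_q$. Fixing an additive symmetric channel $W$ and following a uniformly random branch of the recursive construction produces a process $I_m$ of symmetric rates; invertibility of $G'$ together with the chain rule makes $\{I_m\}$ a bounded martingale (the average of the $l$ child rates equals the parent rate), so it converges almost surely. Polarization is exactly the statement that the limit is supported on $\{0,1\}$, which in turn is controlled by the \emph{flat} channels, those additive symmetric channels for which all $l$ one-step synthetic channels share the common symmetric rate of $W$. The heart of this direction is to show that when $\F_p(G')=\F_q$ the only flat channels are the extremal ones (rate $0$ or $1$); the diagonal-plus-strictly-lower-triangular shape of $G'$ lets one read off the one-step transition, and the field-generation hypothesis is precisely what forbids a nontrivial conserved symmetry, pinning the limit to $\{0,1\}$.

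For necessity I would argue the contrapositive: suppose some standard form has $\F_p(G')=\F_{p^s}\subsetneq\F_q$. Viewing $\F_q$ as an $\F_{p^s}$-vector space, I would build an additive symmetric channel $W$ that respects this subfield decomposition, so that, because every entry of $G'$ lies in $\F_{p^s}$, the recursive construction never couples the complementary $\F_{p^s}$-directions. Choosing $W$ with symmetric rate strictly between $0$ and $1$ then makes every synthetic channel $W_m^{(i)}$ inherit the same intermediate rate, exhibiting a non-extremal flat channel and hence a violation of polarization. This shows $(a)\Rightarrow(b)$ and closes the cycle.

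The main obstacle is the exact characterization of the flat (fixed) channels, which is the technical core shared by both nontrivial directions: on the sufficiency side one must prove that full field generation leaves no intermediate fixed point, and on the necessity side one must construct an explicit intermediate fixed point from a proper subfield. This is precisely the structural machinery developed by Mori and Tanaka \cite{scp}, and the cleanest route is to import their analysis of the symmetric rates of additive symmetric channels rather than re-deriving the martingale convergence and fixed-point classification from scratch.
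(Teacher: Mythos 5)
This statement is imported by the paper verbatim from Mori and Tanaka \cite[Theorem 14]{scp}; the paper gives no proof of it, so the only possible comparison is with the proof in that reference. Your outline reconstructs that proof's strategy faithfully: the reduction via invariance of the symmetric rates under left multiplication by invertible upper-triangular matrices and right multiplication by permutations, the rate martingale, the classification of the fixed (``flat'') channels --- which in \cite{scp} are the channels uniform over cosets of additive subgroups, i.e.\ $\F_p$-subspaces, invariant under multiplication by the entries of $G'$ --- and, for necessity, the construction of an intermediate fixed point from a proper subfield $\F_{p^s}$. Since you ultimately defer this technical core to \cite{scp}, exactly as the paper does, your treatment is consistent with the paper's. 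One caveat: the single equivalence you reduce to ($G'$ polarizes every additive symmetric channel $\iff \F_p(G')=\F_q$) is false as stated in the case $q=p$, where the right-hand side holds even for the identity matrix, which polarizes nothing. The hypothesis that the standard form is not the identity --- which the lemma assumes but your sketch never invokes --- must be carried into both directions; it is what guarantees a nonzero off-diagonal entry $\gamma$, hence some mixing, in the fixed-point analysis.
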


\begin{theorem}\label{19.12.01}
    Let $\{T_i\}_{i=1}^\infty$ be a sequence of invertible matrices. If for each $i$, $T_i$ has a non identity standard form $T'_i$ such that $\mathbb{F}_p(T'_i)=\mathbb{F}_q$, then the sequence $\{T_i\}_{i=1}^\infty$ polarizes to any additive symmetric channel $W$.
\end{theorem}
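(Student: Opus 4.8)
The plan is to run the standard channel-polarization martingale, feeding in Lemma~\ref{19.12.06} to supply single-step polarization at every level of the construction. Two stability facts let the hypotheses be applied level by level. First, the class of additive symmetric channels is closed under the channel combining and splitting induced by an invertible matrix, so every synthetic channel $W_m^{(i)}$ is again additive symmetric. Second, by Lemma~\ref{19.12.06} the hypothesis $\F_p(T_i')=\F_q$ is exactly the statement that each individual kernel $T_i$ polarizes every additive symmetric channel. Thus at the $m$-th level we are always applying a genuinely polarizing kernel $T_m$ to channels to which Lemma~\ref{19.12.06} still applies.

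Next I would set up the tree process. Let $\mathcal{W}_0=W$ and, given $\mathcal{W}_{m-1}$, let $\mathcal{W}_m$ be the uniformly random child among the $n_m$ channels produced from $\mathcal{W}_{m-1}$ by the kernel $T_m$; write $I_m=I(\mathcal{W}_m)$ and let $\mathcal{F}_m$ be the $\sigma$-algebra generated by the first $m$ branchings. The chain rule for mutual information gives the conservation law $\sum_{\text{children}} I(\text{child})=n_m\,I(\mathcal{W}_{m-1})$, i.e. $\mathbb{E}[I_m\mid\mathcal{F}_{m-1}]=I_{m-1}$, so $\{I_m\}$ is a martingale valued in $[0,1]$ (recall $I(\cdot)$ is normalized by $\log_q$). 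By the martingale convergence theorem it converges almost surely and in $L^2$ to a limit $I_\infty$, and since $\mathbb{E}[I_m]=I(W)$ for every $m$ we obtain $\mathbb{E}[I_\infty]=I(W)$.

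The crux is to show $I_\infty\in\{0,1\}$ almost surely. The $L^2$-convergence forces $\mathbb{E}[(I_m-I_{m-1})^2]\to 0$, and the conditional second moment of $I_m-I_{m-1}$ given $\mathcal{F}_{m-1}$ is exactly the variance of the children's capacities under $T_m$. I would argue that a polarizing kernel cannot fix a non-extremal capacity: if applying a polarizing $T$ to a channel $V$ with $I(V)\in(0,1)$ left every child at capacity $I(V)$, then iterating $T$ on $V$ would keep all synthetic channels at $I(V)$, contradicting that $T$ polarizes $V$. Hence this conditional variance is strictly positive whenever $I(\mathcal{W}_{m-1})$ is bounded away from $\{0,1\}$, which together with $\mathbb{E}[(I_m-I_{m-1})^2]\to 0$ and $I_m\to I_\infty$ rules out $\mathbb{P}(I_\infty\in(a,1-a))>0$ for any $a>0$. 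Finally, because the index is uniform, the fraction of level-$m$ channels with capacity near $1$ (resp. near $0$) equals $\mathbb{P}(I_m\approx 1)\to\mathbb{P}(I_\infty=1)=\mathbb{E}[I_\infty]=I(W)$ (resp. $\to\mathbb{P}(I_\infty=0)=1-I(W)$), which are precisely the two limits defining polarization.

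I expect the main obstacle to be the uniformity hidden in the previous paragraph: the strict positivity of the child-capacity variance must be bounded below by a single $\eta(a)>0$ simultaneously for all the (possibly distinct) kernels $T_m$ and all additive symmetric $V$ with $I(V)\in(a,1-a)$. For a fixed repeated kernel this is automatic from a compactness/continuity argument, but for a varying sequence it requires controlling the one-step spread uniformly in $m$. I would handle this by passing to the Bhattacharyya-type parameters of Mori--Tanaka, whose one-step inequalities give a lower bound depending only on $q$ and on the standard-form data $\F_p(T_m')=\F_q$ rather than on the individual matrix, thereby upgrading their single-matrix argument to the whole sequence $\{T_i\}_{i=1}^\infty$.
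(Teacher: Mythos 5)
Your skeleton --- the capacity martingale with the conservation law, a.s.\ and $L^2$ convergence, reduction to $I_\infty\in\{0,1\}$ a.s., and the translation back into the two limiting fractions --- is exactly the paper's route, which in turn simply ports the proof of Lemma~\ref{19.12.06} from Mori--Tanaka. The gaps are in your crux. The soft argument that ``a polarizing kernel cannot fix a non-extremal capacity'' is a non sequitur: if every child of $V$ under $T$ has capacity exactly $I(V)$, you cannot iterate, because the children are \emph{different channels} from $V$, and your zero-spread hypothesis was made about $V$ alone; the grandchildren may well have spread, so no contradiction with ``$T$ polarizes $V$'' follows. Moreover, you correctly flag that what is really needed is a spread bound $\eta(a)>0$ that is uniform over all additive symmetric channels with capacity in $(a,1-a)$ and over all the kernels $T_m$; but the proposed repair (``pass to the Bhattacharyya-type parameters of Mori--Tanaka'') is not an appeal to a ready-made lemma --- it \emph{is} the hard content of their Theorem~14, and compactness does not hand it to you even for one fixed kernel, since both the pointwise positivity and its uniformization require structure (for non-prime $q$, positivity genuinely needs the field-generation hypothesis, not just the word ``polarizing''). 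The paper's terse proof at least names the precise ingredient to be transferred, which you never state: each channel produced in one step from a standard-form kernel is as good as the second split channel of a $2\times 2$ kernel $G_\gamma=\left[\begin{smallmatrix}1&0\\ \gamma&1\end{smallmatrix}\right]$ with $\gamma$ taken from the entries of the standard form; uniformity over channels then comes from the explicit $2\times2$ inequalities, and uniformity over the kernels comes for free because the relevant $\gamma$ range over the finite set $\mathbb{F}_q^*$, using $\mathbb{F}_p(T_m')=\mathbb{F}_q$.

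There is also a step that fails outright as written, and which neither you nor the paper addresses: the $1/n_m$ dilution in your variance bound. Even granting the uniform spread bound, one separated child among $n_m$ gives only $\mathbb{E}[(I_m-I_{m-1})^2\mid\mathcal{F}_{m-1}]\geq \eta(a)^2/n_m$, so summability of $\mathbb{E}[(I_m-I_{m-1})^2]$ yields only $\sum_m \mathbb{P}\bigl(I_{m-1}\in(a,1-a)\bigr)/n_m<\infty$, which does not force $\mathbb{P}\bigl(I_{m-1}\in(a,1-a)\bigr)\to 0$ when the sizes $n_m$ are unbounded --- and the theorem's hypotheses do not bound them. This is a genuine obstruction, not bookkeeping: over $\mathbb{F}_2$ take $T_m=\operatorname{diag}(G_A,I_{4^m-2})$, which is its own non-identity standard form with entries generating $\mathbb{F}_2$, so it satisfies the hypotheses and each $T_m$ alone polarizes by Lemma~\ref{19.12.06}. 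Its one-step children of a channel $V$ are $V^-,V^+,V,\dots,V$, so a uniformly random path misses the $G_A$ block at every level with probability $\prod_m(1-2\cdot 4^{-m})>0$, and a positive limiting fraction of synthetic channels keep capacity exactly $I(W)$: the sequence does not polarize a binary symmetric channel. Any correct proof must therefore use boundedness of the kernel sizes (e.g.\ $n_i\leq q$, as in the Reed--Solomon setting of Theorem~\ref{19.12.04}) or at least $\sum_i 1/n_i=\infty$; your argument, like the paper's sketch, silently assumes this away.
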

\begin{proof}
The proof of the sufficency of Lemma~\ref{19.12.06} relies on the fact that the process $I\left(W_m^{(i)}\right)$ forms a martingale and this channels are as good as $$\left(W_m^{(i)}\right)_1^{(2)},$$ where the last is the second splitted channel by using any $G_\gamma=\begin{bmatrix} 1&0\\ \gamma&1\end{bmatrix}$. The same arguments apply here with slight changes to the process by substituting the sequence $\{G\}_{i=1}^\infty$ by any other sequence $\{T_i\}_{i=1}^\infty$ of invertible matrices.
\end{proof}
The previous result does not imply that if a sequence $\{T_i\}_{i=1}^\infty$ polarizes, then each $T_i$ has a non identity standard form $T'_i$ with $\mathbb{F}_p(T'_i)=\mathbb{F}_q$.  It is enough to consider a sequence $\{I_l\}\cup\{T_i\}_{i=1}^\infty$, where $I_l$ is the identity matrix of size $l$ and each $T_i$ has a non identity standard form with the condition asked before.

In \cite{mk2}, the authors gave conditions over $\mathbb{F}_2$ for a sequence to polarize. Since we are interested on SOF channels, we can strength the last proposition to the following result.

\begin{corollary}
Let $\{T_i\}_{i=1}^\infty$ be a sequence of invertible matrices. If for each $i$, $T_i$ has a non identity standard form, then the sequence $\{T_i\}_{i=1}^\infty$ polarizes any SOF channel $W$.
\end{corollary}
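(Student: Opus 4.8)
The plan is to reduce the statement to Theorem~\ref{19.12.01} by exploiting the extra symmetry available for SOF channels. Theorem~\ref{19.12.01} already gives polarization for any additive symmetric channel once each $T_i$ admits a non identity standard form $T'_i$ with $\mathbb{F}_p(T'_i)=\mathbb{F}_q$; an SOF channel is in particular additive symmetric, so the only gap to close is the field-generation condition $\mathbb{F}_p(T'_i)=\mathbb{F}_q$, which the corollary does not assume. The idea is that, for an SOF channel, symmetry over the product lets us rescale the columns of a kernel without changing the symmetric rates of the synthetic channels, and a single nonzero entry of a standard form can always be rescaled to a generator of $\mathbb{F}_q$ over $\mathbb{F}_p$.

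First I would record the rescaling invariance. Fix $a\in\mathbb{F}_q^\ast$ and let $D$ be the diagonal matrix scaling one column by $a$. In the channel $W_m^{(i)}$ the $j$-th output sees the input $x_j=u_1^n\,Col_j(G_m)$, so replacing a column of some factor $T_i$ by its $a$-multiple replaces the corresponding $x_j$ by $ax_j$. Symmetry over the product provides a permutation $\psi_a$ of $\mathcal{Y}$ with $W(\psi_a(y)\,|\,ax)=W(y\,|\,x)$, whence the factor $W(y_j\,|\,ax_j)$ becomes $W(\psi_a^{-1}(y_j)\,|\,x_j)$ after the bijective relabeling $y_j\mapsto\psi_a^{-1}(y_j)$ of the $j$-th output coordinate. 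Since a coordinatewise bijection of the output alphabet leaves every symmetric rate $I(W_m^{(i)})$ unchanged, the column scaling is harmless. Combined with the two reductions already available for additive symmetric channels, namely invariance of the symmetric rates under a column permutation $P$ and under left multiplication by an upper-triangular $V$, this shows that for an SOF channel the sequences $\{T_i\}$ and $\{V_iT_iP_iD_i\}$ polarize $W$ simultaneously, for arbitrary upper-triangular $V_i$, permutations $P_i$, and diagonal $D_i$.

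Next I would upgrade the standard form. Writing $T'_i=V_iT_iP_i$ for a non identity standard form, there is an off-diagonal entry $\gamma\neq0$, say in column $k$. Scaling column $k$ of $T'_i$ by $a$ and restoring ones on the diagonal by an upper-triangular (diagonal) factor turns that entry into $a\gamma$ while keeping the diagonal equal to $1$; choosing $a$ so that $a\gamma$ generates $\mathbb{F}_q$ over $\mathbb{F}_p$, which is possible because $\{a\gamma:a\in\mathbb{F}_q^\ast\}=\mathbb{F}_q^\ast$ contains field generators, produces a non identity standard form $S'_i$ with $\mathbb{F}_p(S'_i)=\mathbb{F}_q$. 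By the previous paragraph the matrix $S_i$ having $S'_i$ as a standard form polarizes $W$ exactly when $T_i$ does, at every level. Applying Theorem~\ref{19.12.01} to $\{S_i\}$ then shows that $\{S_i\}$, and hence $\{T_i\}$, polarizes $W$.

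The step I expect to require the most care is the rescaling invariance: one must check that column scaling of a single factor $T_i$ propagates through the Kronecker product and the permutation $B_m$ to a coordinatewise scaling of the columns of $G_m$, so that the relabeling by $\psi_a^{-1}$ is genuinely a single bijection of $\mathcal{Y}^N\times\mathbb{F}_q^{i-1}$ preserving every $I(W_m^{(i)})$, and that it composes correctly with the additive-symmetry reductions through the recursive construction. Once this bookkeeping is in place the reduction to Theorem~\ref{19.12.01} is immediate, and the non identity hypothesis is used precisely to supply the one nonzero entry needed to reach a generator of $\mathbb{F}_q$.
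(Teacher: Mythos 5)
Your proof is correct and follows essentially the same route as the paper: the paper establishes the same column-scaling lemma via the product-symmetry permutation $\psi_a$, then multiplies a column of each $T_i$ by a suitable $a\in\mathbb{F}_q^\ast$ so that the resulting non identity standard form generates $\mathbb{F}_q$ over $\mathbb{F}_p$, and concludes by applying Theorem~\ref{19.12.01} together with the lemma. Your write-up is merely more explicit than the paper's about two points it leaves implicit, namely the propagation of the column scaling through the Kronecker product and $B_m$ at every level $m$, and the exact construction of the rescaled standard form.
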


The proof of the last relies in the following lemma.

\begin{lemma}
    Let $G\in \F_q^{l\times l}$ be an invertible matrix and $G'$ be the matrix with $Col_1G'=aCol_1G$ for some $a \in \F_q^*$ and $Col_jG'=Col_jG$ for $2 \leq j \leq n$. Let $W:\mathbb{F}_q\rightarrow\mathcal{Y}$ be a SOF channel. If $W_1^{(i)}$, $1\leq i\leq l$ are the splitted channels of the polarization process using $G$ and ${W'}_1^{(i)}$, $1\leq i\leq l$ are the same but with $G'$, then
    $$I\left(W_1^{(i)}\right)=I\left({W'}_1^{(i)}\right).$$
\end{lemma}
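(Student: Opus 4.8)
The plan is to show that the two families of split channels differ only by a relabeling of the output alphabet, so that their symmetric rates agree term by term. First I would unwind the definitions in the case $m=1$, where $n=l$ and $G_1=G$. Writing $x_j=u_1^l\,Col_jG=\sum_{k=1}^l u_kG_{kj}$ for the $j$-th transmitted symbol, the split channel is
$$W_1^{(i)}(y_1^l,u_1^{i-1}\mid u_i)=\frac{1}{q^{l-1}}\sum_{u_{i+1}^l\in\F_q^{l-1}}\prod_{j=1}^l W(y_j\mid x_j).$$
Since $G'$ is obtained from $G$ by multiplying only the first column by $a$, the associated symbols are $x_1'=a\,x_1$ and $x_j'=x_j$ for $2\le j\le l$; that is, scaling a single column perturbs exactly one factor in the product, namely the first.

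Next I would invoke the symmetry of $W$ over the product, which holds because $W$ is SOF: for $a\in\F_q^\ast$ there is a permutation $\psi_a$ of $\mathcal{Y}$ with $W(y\mid x)=W(\psi_a(y)\mid ax)$ for all $x,y$, and rewriting this gives $W(y_1\mid a\,x_1)=W(\psi_a^{-1}(y_1)\mid x_1)$. Substituting into the formula for ${W'}_1^{(i)}$ and leaving the untouched factors alone yields
$${W'}_1^{(i)}(y_1^l,u_1^{i-1}\mid u_i)=\frac{1}{q^{l-1}}\sum_{u_{i+1}^l}W(\psi_a^{-1}(y_1)\mid x_1)\prod_{j=2}^l W(y_j\mid x_j)=W_1^{(i)}\bigl(\psi_a^{-1}(y_1),y_2^l,u_1^{i-1}\mid u_i\bigr).$$
Hence the map $\Phi$ on the output space $\mathcal{Y}^l\times\F_q^{i-1}$ that applies $\psi_a^{-1}$ to the first coordinate and fixes all others is a bijection, and for every $i$ one has ${W'}_1^{(i)}(\mathbf y\mid u_i)=W_1^{(i)}(\Phi(\mathbf y)\mid u_i)$, with the input $u_i$ untouched.

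Finally I would conclude by the standard fact that the symmetric rate is unchanged under a bijective relabeling of the output alphabet. Reindexing the defining sum of $I\bigl({W'}_1^{(i)}\bigr)$ by $\mathbf z=\Phi(\mathbf y)$ works because $\Phi$ is a bijection of the output set; both the conditional probability ${W'}_1^{(i)}(\mathbf y\mid u_i)$ and its input-average $\tfrac1q\sum_{u_i'}{W'}_1^{(i)}(\mathbf y\mid u_i')$ are then evaluated at the same relabeled output $\Phi(\mathbf y)$, so the summand is carried exactly to that of $I\bigl(W_1^{(i)}\bigr)$, giving $I\bigl(W_1^{(i)}\bigr)=I\bigl({W'}_1^{(i)}\bigr)$.

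I would expect the only genuine care to lie in the bookkeeping rather than in any conceptual difficulty: confirming that scaling a single column perturbs exactly one symbol $x_j$, and tracking the direction of the permutation ($\psi_a$ versus $\psi_a^{-1}$) so that the relabeling is a bona fide bijection of the outputs that fixes the inputs. It is worth noting that this argument uses only symmetry over the product and not symmetry over the sum; the lemma thus isolates precisely the additional flexibility that SOF channels afford beyond the additive-symmetric setting of Theorem~\ref{19.12.01}.
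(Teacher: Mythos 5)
Your proof is correct and follows essentially the same route as the paper's: apply the product symmetry of the SOF channel to absorb the column scaling into a permutation of the first output coordinate, exhibit the two split channels as equal up to a bijection of the output alphabet, and conclude that their symmetric rates coincide. The only cosmetic difference is that you use $\psi_a^{-1}$ where the paper writes the identity in the other direction (expressing $W_1^{(i)}$ at $y_1$ as ${W'}_1^{(i)}$ at $\psi_a(y_1)$), which is the same statement.
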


\begin{proof}
    Let $\psi_a$ the permutation of $\mathcal{Y}$ such that $$W(y|x)=W(\psi_a(y)|ax)$$ for any $x\in\mathbb{F}_q$ and $y\in\mathcal{Y}$. Then

    \begin{align*}
    W_1^{(i)}\left(y_1^l,u_1^{i-1}|u_i\right)=&\sum_{u_{i+1}^l\in\mathbb{F}_q^{l-1}}\prod_{j=1}^l W\left(y_j|u_1^l Col_j G\right)\\
    =&\sum_{u_{i+1}^l\in\mathbb{F}_q^{l-1}}\left(W(\psi_a(y_1)|u_1^l(aCol_1 G))\prod_{j=2}^l W\left(y_j|u_1^l Col_j G\right)\right)\\
    =&{W'}_1^{(i)}\left((\psi_a(y_1),y_2^l),u_1^{i-1}|u_i\right)\end{align*}
    
    Since $W_1^{(i)}$ and ${W'}_1^{(i)}$ has the same distribution but a bijection over the output alphabet, they have the same symmetric rate. 
\end{proof}

If $T_i$ has a non identity standard form, we can multiply the $n_i-1$ column by some $a\in\mathbb{F}_q^\ast$ to obtain $\overline{T}_i$ which has a standard $\overline{T}'_i$ form such that $\mathbb{F}_p(\overline{T}'_i)=\mathbb{F}_q$. Since a SOF channel is symmetric, the sequence $\{\overline{T}_i\}_{i=1}^\infty$ polarizes and by the last lemma, $\{T_i\}_{i=1}^\infty$ polarizes too. In the light of this, we can generalize the definition of polar codes.

\begin{definition}\rm
Let $W:\mathcal{X}\rightarrow\mathcal{Y}$ be a DMC channel with $|\mathcal{X}|=q$. For $x,x'\in\mathbb{F}_q$, $x\neq x'$, we define the {\bf Bhattacharyya distance} as
\[Z(x,x')=\sum_{y\in\mathcal{Y}}\sqrt{W(y|x)W(y|x')}\]
and the {\bf Bhattacharyya parameter} as
\[Z(W)=\frac{1}{q(q-1)}\sum_{\substack{x,x'\in\mathcal{X}\\ x\neq x'}}Z(x,x'),\]
 the average of the Bhattacharyya distances over $\mathcal{X}$.
\end{definition}

\begin{definition}\rm\label{19.12.07}
Let $\{T_i\}_{i=1}^\infty$ be a sequence of invertible matrices that polarizes the channel $W:\mathbb{F}_q\rightarrow\mathcal{Y}$. Let $m$ be a positive integer and let $n=\prod_{i=1}^m n_i$, where $n_i$ are the sizes of $T_i$ as before. We define an {\bf information set} $\mathcal{A}_m\subset\{1,\ldots,n\}$ as a set such that
\[Z\left(W_m^{(i)}\right)\leq Z\left(W_m^{(j)}\right),\ \ \ \forall i\in\mathcal{A}_m, \quad \forall j\notin\mathcal{A}_m.\]
A {\bf polar code} is the subspace $C_{\mathcal{A}_m}$ generated by the rows of $G_m$ indexed by $\mathcal{A}_m.$
\end{definition}
It is known that  $I(W)\rightarrow 1$ if and only if $Z(W)\rightarrow 0$ \cite[Lemma 5]{cparb}. Therefore, as $n$ grows, it is the same selecting $Z$ or $I$ to construct $\mathcal{A}_m$, but by selecting $Z$ we can easily (upper) bound the error probability for a successive cancellation decoder.

\section{Decreasing monomial-Cartesian codes}\label{19.12.02}

A decreasing monomial-Cartesian code is defined using the following concepts. Let $K:=\mathbb{F}_q$ be a finite field with $q$ elements and $R:=K[x_1,\ldots,x_m]$ be the polynomial ring over $K$ in $m$ variables. Given a point $\bm{a}=(a_1,\dots,a_m)\in\mathbb{Z}_{\geq 0}^m$,  $\bm{x}^{\bm{a}}$ is the corresponding monomial in $R$; i.e. $\bm{x}^{\bm{a}}:=x_1^{a_1}\cdots x_m^{a_m}.$ A {\bf decreasing monomial set} is a set of monomials $\mathcal{M}\subseteq R$ such that both conditions $M\in \mathcal{M}$ and $M^\prime$ divides $M$ imply $M^\prime \in \mathcal{M}.$ Let $L(\mathcal{M})$ be the subspace of polynomials of $R$ that are $K$-linear combinations of monomials of $\mathcal{M}:$
\[ L(\mathcal{M}):=\operatorname{Span}_K\{M : M \in \mathcal{M} \}\subseteq R.\]
Fix non-empty subsets $S_1,\ldots,S_m$ of $K$. The {Cartesian product} is defined by
\[\mathcal{S}:=S_1\times\cdots\times S_m\subseteq K^{m}.\]
In what follows, $n_i:=|S_i|$, the cardinality of $S_i$ for $i\in[m]:=\left\{ 1, \dots, m \right\}$, and $n:=|\mathcal{S}|,$ the cardinality of $\mathcal{S}.$
Fix a linear order on $\mathcal{S}=\{\bm{s}_1,\ldots,\bm{s}_n\},$ $\bm{s}_1 \prec \cdots \prec \bm{s}_n$. We define an {\bf evaluation map}
\[
\begin{array}{lclll}
{\operatorname{ev}_{\mathcal{S}}}\colon & {L}(\mathcal{M})\ & \to & K^n\\
&f &\mapsto &  \left( f(\bm{s}_1),\ldots,f(\bm{s}_n)\right).
\end{array}
\]
From now on, we assume that the degree of each monomial $M\in \mathcal{M}$ in $x_i$ is less than $n_i$. In this case the evaluation map $\operatorname{ev}_{\mathcal{S}}$ is injective, see \cite[Proposition 2.1]{lopez-matt-sopru}. The {\bf complement} of $\mathcal{M}$ in $\mathcal{S}$ denoted by $\mathcal{M}^c_{\mathcal{S}},$ is the set of all monomials in $R$ that are not in $\mathcal{M}$ and their degree respect $x_i$ is less than $n_i.$

\begin{definition}\rm Let $\mathcal{M} \subseteq R$ be a decreasing monomial set. The image $\operatorname{ev}_{\mathcal{S}}(L(\mathcal{M}))\subseteq K^n$ is called the {\bf decreasing monomial-Cartesian code} associated to $\mathcal{S}$ and $\mathcal{M}$. We denote it by $C(\mathcal{S},\mathcal{M})$. When the monomial set is not decreasing, the associated code is called {\bf monomial-Cartesian code}~\cite{lopez-matt-sopru}.
\end{definition}

The length and the dimension of a decreasing monomial-Cartesian code $C(\mathcal{S},\mathcal{M})$ are given by $n=|\mathcal{S}|$ and $k=\dim_K C(\mathcal{S},\mathcal{M})=|\mathcal{M}|$, respectively \cite[Proposition 2.1]{lopez-matt-sopru}. Recall that the {\bf minimum distance\/} of a code $C$  is given by 
\[
d(C)=\min\{|\operatorname{Supp}({\bm c})| : {\bm 0}\neq {\bm c}\in C\},
\]
where $\operatorname{Supp}({\bm c})$ denotes the support of ${\bm c}$, that is the set of all non-zero entries of ${\bm c}$. Unlike the case of the length and the dimension, in general, there is no explicit formula for $d (C(\mathcal{S},\mathcal{M}))$ in terms of $\mathcal{S}$ and $\mathcal{M}.$

The {\bf dual} of a code $C$ is defined by
\[C^{\perp} = \{ \bm{w}\in K^n : \bm{w}\cdot\bm{c}=0 \text{ for all } \bm{c}\in C \}, \]
where $\bm{w}\cdot\bm{c}$ represents the {\bf Euclidean inner product}. The code $C$ is called a {\bf linear complementary dual} ({\bf LCD}) \cite{JMassey} if $C\cap C^{\perp}=\{ \bm{0} \},$ and is called a {\bf self-orthogonal} code if $C^{\perp} \subseteq C.$

Instances of decreasing monomial-Cartesian codes for particular families of Cartesian products $\mathcal{S}$ and particular families of decreasing monomial sets $\mathcal{M}$ have been previously studied in the literature. For example, a {\bf Reed-Muller code} of order $r$ in the sense of  \cite[p.~37]{tsfasman} is a decreasing monomial-Cartesian code $C(K^m,M_r),$ where $M_r$ is the set of monomials of degree less than $r$.  An {\bf affine Cartesian code} of order $r$ is the {decreasing monomial-Cartesian code} $C(\mathcal{S},M_{r}).$ This family of affine Cartesian codes appeared first in \cite{Geil} and then independently in \cite{lopez-villa}. In \cite{Bardet}, the authors studied the case when the finite field $K$ is $\mathbb{F}_2$ and the set of monomials satisfy some decreasing conditions; then their results were generalized in \cite{Vardohus} for $K=\mathbb{F}_q$ and monomials associated to  curve kernels. The case when the set of monomials $\mathcal{M}$ is a tensor product, the minimum distance of the associated code can be computed using the same ideas that~\cite{Sop}.

\rmv{
It is important to note that some families of monomial-Cartesian codes are not decreasing. For instance,
the family of codes given in~\cite{TamoBarg}, which is well-known for its applications to
distributed storage, are not decreasing because these are subcodes of Reed-Solomon
codes where some monomials are omitted. To be precise, fix $r \geq 2$ with $r+1 | n$. Set $$V := \left< g(x)^j x^i : 0 \leq j \leq \frac{k}{r}-1,  0 \leq i \leq r-1 \right> $$ where $g(x) \in \F_q[x]$ has $\deg g=r+1$ and 
 $\F_q=A_1 \overset{\cdot}{\cup} \cdots \overset{\cdot}{\cup} A_{\frac{n}{r+1}}$ with $|A_j|=r$ for all $j$ so that $ \forall \beta, \beta' \in A_j$,
$$g(\beta)=g(\beta').$$ Then $C(\F_q, V)$ is not decreasing as $g(x)^jx^i \in V$ and $x$ divides $g(x)^jx^i$ but $x \notin V$.

This notion of divisibility will be restricted to codes defined by sets of monomials as defined above. Recall that given a curve $X$ over a finite field $\F$ and a divisor $G$ on $X$, the space of rational functions associated with $G$, sometimes called the Riemann-Roch space of $G$, is $$\mathcal{L}(G):=\left\{ f \in \F(X): (f) + G \geq 0 \right\} \cup \left\{ 0 \right\}$$ where $(f)$ denotes the divisor of $f$. In general $\mathcal{L}(G)$ is not decreasing, meaning $f \in \mathcal{L}(G)$ and $f=gh$ does not imply $g, h \in \mathcal{L}(G)$. For instance, if one considers the Hermitian curve $X$ given by $y^q+y=x^{q+1}$ over $\F_{q^2}$, then $y \in \mathcal{L}((q+1)P_{\infty})$. However, $y=x \frac{y}{x}$, but $\left( \frac{y}{x} \right) = (y)-(x)=qP_{00}-P_{\infty}-\sum_{b \neq 0} P_{0b} \ngeq -(q+1)P_{\infty}$. Hence, $\frac{y}{x} \notin  \mathcal{L}((q+1)P_{\infty})$.}

A {\bf monomial matrix} is a square matrix with exactly one nonzero entry in each row and column.
Let $C_1$ and $C_2$ be codes of the same length over $K$, and let $G_1$ be a generator matrix for
$C_1.$ Then $C_1$ and $C_2$ are {\bf monomially equivalent} provided there is a monomial matrix $M$
with entries over the same field $K$ so that $G_1M$ is a generator matrix of $C_2.$ Monomially
equivalent codes have the same length, dimension, and minimum distance.

\begin{definition}\rm
For $\bm{s}=\left(s_1,\ldots,s_m\right)\in \mathcal{S}$ and
$f\in R,$ define the {\bf residue} of $f$ at $\bm{s}$ as
\[
\operatorname{Res}_{\bm{s}}f=f(\bm{s})
\left(
\prod_{i=1}^m \prod_{\substack{s_{i}^\prime\in S_i\setminus\{s_i\}}}\left(s_i-s_i^{\prime}\right)
\right)^{-1}.
\]
and the {\bf residue vector} of $f$ at $\mathcal{S}$ as
\[\operatorname{Res}_{\mathcal S}f=
\left(\operatorname{Res}_{\bm{s}_1}f,\ldots, \operatorname{Res}_{\bm{s}_n}f \right).\]
\end{definition}

\begin{theorem}
The dual of the code $C(\mathcal{S},\mathcal{M})$ is monomially equivalent to a
decreasing monomial-Cartesian code. In fact,
$$C(\mathcal{S},\mathcal{M})^{\perp} = \operatorname{Span}_K\left(\left\{\operatorname{Res}_{\mathcal{S}} \frac{x_1^{n_1-1}\cdots x_m^{n_m-1}}{M}: M\in \mathcal{M}^c\right\} \right).$$ Moreover, $$\Delta:=\left\{\operatorname{Res}_{\mathcal{S}} \frac{x_1^{n_1-1}\cdots x_m^{n_m-1}}{M}: M\in \mathcal{M}^c\right\}$$ is a basis for $C(\mathcal{S},\mathcal{M})^{\perp}$. 
\end{theorem}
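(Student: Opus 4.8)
The plan is to reduce the statement to a single orthogonality computation together with a dimension count, the arithmetic backbone of which is a one-variable Lagrange-interpolation identity. First I would record the bookkeeping: since the set of all monomials with $\deg_{x_i}<n_i$ has cardinality $\prod_i n_i=n$ and is partitioned as $\mathcal{M}\sqcup\mathcal{M}^c$, and since $\dim_K C(\mathcal{S},\mathcal{M})=|\mathcal{M}|$, the dual has dimension $\dim_K C(\mathcal{S},\mathcal{M})^{\perp}=n-|\mathcal{M}|=|\mathcal{M}^c|$. As $|\Delta|\le|\mathcal{M}^c|$, it suffices to prove (i) every element of $\Delta$ is orthogonal to $C(\mathcal{S},\mathcal{M})$, and (ii) $\Delta$ is linearly independent; equality of $\operatorname{Span}_K(\Delta)$ with the dual and the basis claim then follow by counting.

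For (i) the essential input is the one-variable identity asserting that $\sum_{s\in S}\frac{s^{k}}{\prod_{s'\in S\setminus\{s\}}(s-s')}$ equals $0$ for $0\le k\le|S|-2$ and equals $1$ for $k=|S|-1$, which is simply the leading coefficient of the Lagrange interpolant of $x^{k}$. Writing $M=\bm{x}^{\bm a}\in\mathcal{M}$ and $N=\bm{x}^{\bm b}\in\mathcal{M}^c$, so that $\bm{x}^{\bm n-\bm 1}/N=\prod_i x_i^{\,n_i-1-b_i}$, I would exploit the Cartesian structure: because both evaluation and the residue normalization factor variable-by-variable, the Euclidean pairing $\operatorname{ev}_{\mathcal{S}}(M)\cdot\operatorname{Res}_{\mathcal{S}}(\bm{x}^{\bm n-\bm 1}/N)$ factors as $\prod_{i=1}^m\left(\sum_{s_i\in S_i}\frac{s_i^{\,a_i+n_i-1-b_i}}{\prod_{s_i'\in S_i\setminus\{s_i\}}(s_i-s_i')}\right)$. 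The decreasing hypothesis then does the work: if $b_i\le a_i$ held for every $i$ then $\bm{x}^{\bm b}\mid\bm{x}^{\bm a}$, and closure under divisibility would force $N\in\mathcal{M}$, contradicting $N\in\mathcal{M}^c$; hence some index $i_0$ has $a_{i_0}<b_{i_0}$, which places $a_{i_0}+n_{i_0}-1-b_{i_0}$ in the range $[0,n_{i_0}-2]$ and makes that factor, and therefore the whole product, vanish. Correctly marrying the variablewise factorization to this combinatorial consequence of decreasingness is where I expect the only real subtlety to lie.

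For (ii) I would observe that $\operatorname{Res}_{\mathcal{S}}f=\operatorname{ev}_{\mathcal{S}}(f)\,D$, where $D$ is the diagonal matrix with nonzero entries $c_{\bm s}=\left(\prod_{i}\prod_{s_i'\in S_i\setminus\{s_i\}}(s_i-s_i')\right)^{-1}$; being invertible, $D$ preserves linear independence. The monomials $\bm{x}^{\bm n-\bm 1}/N$, as $N$ ranges over $\mathcal{M}^c$, are distinct and all satisfy $\deg_{x_i}<n_i$, so their evaluation vectors form a subset of the basis $\{\operatorname{ev}_{\mathcal{S}}(\bm x^{\bm c}):0\le c_i<n_i\}$ of $K^n$ (using injectivity of $\operatorname{ev}_{\mathcal{S}}$ on $L$ together with $\dim L=n$), hence are linearly independent; applying $D$ shows $\Delta$ is independent. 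Combining (i) and (ii) with the dimension count gives $C(\mathcal{S},\mathcal{M})^{\perp}=\operatorname{Span}_K(\Delta)$ and that $\Delta$ is a basis.

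Finally, for the equivalence claim I would set $\mathcal{M}':=\{\bm{x}^{\bm n-\bm 1}/N: N\in\mathcal{M}^c\}$ and check it is a decreasing monomial set: if $\bm{x}^{\bm d}\mid\bm{x}^{\bm n-\bm1}/N$ then, writing $N'=\bm{x}^{\bm n-\bm1}/\bm{x}^{\bm d}$, one has $N\mid N'$, so $N'\notin\mathcal{M}$ (else decreasingness of $\mathcal{M}$ would force $N\in\mathcal{M}$), whence $N'\in\mathcal{M}^c$ and $\bm{x}^{\bm d}=\bm{x}^{\bm n-\bm1}/N'\in\mathcal{M}'$. Since $\operatorname{Res}_{\mathcal{S}}=\operatorname{ev}_{\mathcal{S}}(\cdot)\,D$ with $D$ a diagonal, hence monomial, invertible matrix, a generator matrix of $C(\mathcal{S},\mathcal{M})^{\perp}$ is obtained from one of $C(\mathcal{S},\mathcal{M}')$ by right multiplication by $D$; by the definition recalled above, $C(\mathcal{S},\mathcal{M})^{\perp}$ is therefore monomially equivalent to the decreasing monomial-Cartesian code $C(\mathcal{S},\mathcal{M}')$.
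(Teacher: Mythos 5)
Your proposal is correct, and it proves the theorem by a genuinely more self-contained route than the paper does. The paper's proof consists of (a) showing that $\Delta' = \left\{x_1^{n_1-1}\cdots x_m^{n_m-1}/M : M \in \mathcal{M}^c\right\}$ is a decreasing monomial set --- an argument identical in substance to your final paragraph --- and then (b) citing Theorem 2.7 of \cite{lopez-matt-sopru} and its proof for the fact that $\Delta$ is a basis of $C(\mathcal{S},\mathcal{M})^{\perp}$, after which the monomial equivalence is read off. You never invoke that external theorem: you re-derive its content for the decreasing case by factoring the Euclidean pairing of $\operatorname{ev}_{\mathcal{S}}(\bm{x}^{\bm a})$ with $\operatorname{Res}_{\mathcal{S}}\bigl(x_1^{n_1-1}\cdots x_m^{n_m-1}/\bm{x}^{\bm b}\bigr)$ coordinatewise into one-variable Lagrange leading-coefficient sums, using decreasingness of $\mathcal{M}$ to produce an index $i_0$ with $a_{i_0}<b_{i_0}$, which places the exponent $a_{i_0}+n_{i_0}-1-b_{i_0}$ in the vanishing range $[0,n_{i_0}-2]$, and then finishing with linear independence (injectivity of $\operatorname{ev}_{\mathcal{S}}$ composed with the invertible diagonal matrix $D$) and a dimension count. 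The trade-off is clear: the paper's citation-based proof is shorter and leans on a result valid for arbitrary monomial-Cartesian codes, so its only new content is the decreasingness of $\Delta'$; your version is longer but self-contained, and it makes visible exactly where the decreasing hypothesis enters --- for a non-decreasing $\mathcal{M}$ one can have $N\in\mathcal{M}^c$ dividing $M\in\mathcal{M}$, in which case the relevant exponent is at least $n_{i_0}$ and the corresponding Lagrange sum need not vanish, so the evaluation/residue pairing is only triangular rather than orthogonal on the relevant pairs. Both proofs establish all three assertions of the statement, and your handling of the monomial equivalence via the diagonal (hence monomial) matrix $D$ matches the paper's definition of monomial equivalence exactly.
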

\begin{proof}
We starting proving that the set
\[ \Delta^\prime:=\left\{\frac{x_1^{n_1-1}\cdots x_m^{n_m-1}}{M}: M\in \mathcal{M}^c_{\mathcal{S}}\right\} \]
is decreasing. Let $M \in \mathcal{M}^c_{\mathcal{S}}$ and $\bm{x}^{\bm a}$ a divisor of
$\frac{x_1^{n_1-1}\cdots x_m^{n_m-1}}{M}.$ Then there exists a monomial $\bm{x}^{\bm{b}}$ in $R$
such that $\frac{x_1^{n_1-1}\cdots x_m^{n_m-1}}{M}={\bm{x}^{\bm{a}}\bm{x}^{\bm{b}}}.$
As $M\in \mathcal{M}^c$
and $\mathcal{M}$ is decreasing, then ${\bm{x}^{\bm{b}}}M\in \mathcal{M}^c$ and
\(\bm{x}^{\bm{a}}=\frac{x_1^{n_1-1}\cdots x_m^{n_m-1}}{\bm{x}^{\bm{b}}M} \in \Delta^\prime.\)
This proves that the set $\Delta^\prime$ is decreasing.
Due to \cite[Theorem 2.7]{lopez-matt-sopru} and its proof, \(\Delta\) is a basis
for the dual $C(\mathcal{S},\mathcal{M})^{\perp}.$ Finally, it is clear that
$\operatorname{Span}_K\{{\bm c} : {\bm c} \in \Delta \}$
is monomially equivalent to $\operatorname{ev}_{\mathcal{S}}(\Delta^\prime),$ which is a
{decreasing monomial-Cartesian code}.
\end{proof}

\begin{example}\rm \label{19.08.17}
Let $K=\mathbb{F}_7,$ $\mathcal{S}=K^2$ and $\mathcal{M}$ the set of monomials of $K[x_1,x_2]$
whose exponents are the points in the left picture below. Then the code
$C(\mathcal{S},\mathcal{M})$ is generated by the vectors
$\operatorname{ev}_{\mathcal{S}} (\text{\textcolor{red}{$M$}}),$ where \textcolor{red}{$M$} is a monomial
whose exponent is a point in the left picture below and the dual
$C(\mathcal{S},\mathcal{M})^{\perp}$ is generated by the vectors
$\operatorname{Res}_{\mathcal{S}} (\text{\textcolor{blue}{$M$}}),$ where \textcolor{blue}{$M$} is a
monomial whose exponent is a point in the right picture in Figure \ref{fig1}.
\end{example}

\begin{figure}[h] \label{fig1}
\noindent
\begin{minipage}[t]{0.4\textwidth}
\begin{tikzpicture}[scale=0.7]
\draw [-latex] (0,0) -- (6.5,0)node[right] {$K$};
\draw [dashed] (0,1)node[left]{1} -- (6,1)node[right] {};
\draw [dashed] (0,2)node[left]{2} -- (6,2)node[right] {};
\draw [dashed] (0,3)node[left]{3} -- (6,3)node[right] {};
\draw [dashed] (0,4)node[left]{4} -- (6,4)node[right] {};
\draw [dashed] (0,5)node[left]{5} -- (6,5)node[right] {};
\draw [dashed] (0,6)node[left]{6} -- (6,6)node[right] {};
\draw [-latex] (0,0)node[below left]{0} -- (0,6.5)node[above] {$K$};
\draw [dashed] (1,0)node[below]{1} -- (1,6)node[right] {};
\draw [dashed] (2,0)node[below]{2} -- (2,6)node[right] {};
\draw [dashed] (3,0)node[below]{3} -- (3,6)node[right] {};
\draw [dashed] (4,0)node[below]{4} -- (4,6)node[right] {};
\draw [dashed] (5,0)node[below]{5} -- (5,6)node[right] {};
\draw [dashed] (6,0)node[below]{6} -- (6,6)node[right] {};
\fill [color=red](0,0) {circle(.10cm)};
\fill [color=red](1,0) {circle(.10cm)};
\fill [color=red](2,0) {circle(.10cm)};
\fill [color=red](3,0) {circle(.10cm)};
\fill [color=red](4,0) {circle(.10cm)};
\fill [color=red](5,0) {circle(.10cm)};
\fill [color=red](0,1) {circle(.10cm)};
\fill [color=red](1,1) {circle(.10cm)};
\fill [color=red](2,1) {circle(.10cm)};
\fill [color=red](3,1) {circle(.10cm)};
\fill [color=red](4,1) {circle(.10cm)};
\fill [color=red](5,1) {circle(.10cm)};
\fill [color=red](0,2) {circle(.10cm)};
\fill [color=red](1,2) {circle(.10cm)};
\fill [color=red](2,2) {circle(.10cm)};
\fill [color=red](3,2) {circle(.10cm)};
\fill [color=red](4,2) {circle(.10cm)};
\fill [color=red](5,2) {circle(.10cm)};
\fill [color=red](0,3) {circle(.10cm)};
\fill [color=red](1,3) {circle(.10cm)};
\fill [color=red](2,3) {circle(.10cm)};
\fill [color=red](3,3) {circle(.10cm)};
\fill [color=red](4,3) {circle(.10cm)};
\fill [color=red](0,4) {circle(.10cm)};
\fill [color=red](1,4) {circle(.10cm)};
\fill [color=red](2,4) {circle(.10cm)};
\fill [color=red](3,4) {circle(.10cm)};
\fill [color=red](4,4) {circle(.10cm)};
\fill [color=red](0,5) {circle(.10cm)};
\fill [color=red](1,5) {circle(.10cm)};
\fill [color=red](2,5) {circle(.10cm)};
\fill [color=red](0,6) {circle(.10cm)};
\fill [color=red](1,6) {circle(.10cm)};
\fill [color=red](2,6) {circle(.10cm)};
\end{tikzpicture}
\end{minipage}
\begin{minipage}[t]{0.4\textwidth}
\begin{tikzpicture}[scale=0.7]
\draw [-latex] (6,6) -- (6,-0.5)node[below] {$K$};
\draw [dashed] (0,0)node[left]{} -- (6,0)node[right] {6};
\draw [dashed] (0,1)node[left]{} -- (6,1)node[right] {5};
\draw [dashed] (0,2)node[left]{} -- (6,2)node[right] {4};
\draw [dashed] (0,3)node[left]{} -- (6,3)node[right] {3};
\draw [dashed] (0,4)node[left]{} -- (6,4)node[right] {2};
\draw [dashed] (0,5)node[left]{} -- (6,5)node[right] {1};
\draw [dashed] (0,6)node[left]{} -- (6,6)node[right] {};
\draw [-latex] (6,6)node[above right]{0} -- (-0.5,6)node[left] {$K$};
\draw [dashed] (0,0)node[below]{} -- (0,6)node[above] {6};
\draw [dashed] (1,0)node[below]{} -- (1,6)node[above] {5};
\draw [dashed] (2,0)node[below]{} -- (2,6)node[above] {4};
\draw [dashed] (3,0)node[below]{} -- (3,6)node[above] {3};
\draw [dashed] (4,0)node[below]{} -- (4,6)node[above] {2};
\draw [dashed] (5,0)node[below]{} -- (5,6)node[above] {1};
\draw [dashed] (6,0)node[below]{} -- (6,6)node[above] {};
\fill [color=blue](6,0) {circle(.10cm)};
\fill [color=blue](6,1) {circle(.10cm)};
\fill [color=blue](6,2) {circle(.10cm)};
\fill [color=blue](5,3) {circle(.10cm)};
\fill [color=blue](6,3) {circle(.10cm)};
\fill [color=blue](5,4) {circle(.10cm)};
\fill [color=blue](6,4) {circle(.10cm)};
\fill [color=blue](3,5) {circle(.10cm)};
\fill [color=blue](4,5) {circle(.10cm)};
\fill [color=blue](5,5) {circle(.10cm)};
\fill [color=blue](6,5) {circle(.10cm)};
\fill [color=blue](3,6) {circle(.10cm)};
\fill [color=blue](4,6) {circle(.10cm)};
\fill [color=blue](5,6) {circle(.10cm)};
\fill [color=blue](6,6) {circle(.10cm)};
\end{tikzpicture}
\end{minipage}
\caption{} 
\end{figure}

\begin{definition}\label{gset}\rm
A subset $\mathcal{B}({\mathcal{M}}) \subseteq \mathcal{M}$ is a {\bf generating set} of $\mathcal{M}$
if for every $M\in \mathcal{M}$ there exists a monomial $B\in \mathcal{B}({\mathcal{M}})$ such that
$M$ divides $B.$ A generating set $\mathcal{B}({\mathcal{M}})$ is called {\bf minimal} if for every
two elements $B_1, B_2 \in \mathcal{B}({\mathcal{M}}),$ $B_1$ does not divide $B_2$ and
$B_2$ does not divide $B_1.$
\end{definition}

\begin{example}\rm\label{19.08.20}
Let $K=\mathbb{F}_7,$ $\mathcal{S}=K^2$ and $\mathcal{M}$ the set of monomials of $K[x_1,x_2]$
whose exponents are
the points in the left picture of Example~\ref{19.08.17}.
The circled points in the Figure \ref{fig2} are the exponents of the monomials
that belong to the minimal generating set of $\mathcal{M}.$
\end{example}

\begin{figure} \label{fig2}
\begin{center}
\begin{tikzpicture}[scale=0.75]
\draw [-latex] (0,0) -- (7,0)node[right] {$K$};
\draw [dashed] (0,1)node[left]{1} -- (7,1)node[right] {};
\draw [dashed] (0,2)node[left]{2} -- (7,2)node[right] {};
\draw [dashed] (0,3)node[left]{3} -- (7,3)node[right] {};
\draw [dashed] (0,4)node[left]{4} -- (7,4)node[right] {};
\draw [dashed] (0,5)node[left]{5} -- (7,5)node[right] {};
\draw [dashed] (0,6)node[left]{6} -- (7,6)node[right] {};
\draw [-latex] (0,0)node[below left]{0} -- (0,7)node[above] {K};
\draw [dashed] (1,0)node[below]{1} -- (1,7)node[right] {};
\draw [dashed] (2,0)node[below]{2} -- (2,7)node[right] {};
\draw [dashed] (3,0)node[below]{3} -- (3,7)node[right] {};
\draw [dashed] (4,0)node[below]{4} -- (4,7)node[right] {};
\draw [dashed] (5,0)node[below]{5} -- (5,7)node[right] {};
\draw [dashed] (6,0)node[below]{6} -- (6,7)node[right] {};
\fill [color=red](0,0) {circle(.10cm)};
\fill [color=red](1,0) {circle(.10cm)};
\fill [color=red](2,0) {circle(.10cm)};
\fill [color=red](3,0) {circle(.10cm)};
\fill [color=red](4,0) {circle(.10cm)};
\fill [color=red](5,0) {circle(.10cm)};
\fill [color=red](0,1) {circle(.10cm)};
\fill [color=red](1,1) {circle(.10cm)};
\fill [color=red](2,1) {circle(.10cm)};
\fill [color=red](3,1) {circle(.10cm)};
\fill [color=red](4,1) {circle(.10cm)};
\fill [color=red](5,1) {circle(.10cm)};
\fill [color=red](0,2) {circle(.10cm)};
\fill [color=red](1,2) {circle(.10cm)};
\fill [color=red](2,2) {circle(.10cm)};
\fill [color=red](3,2) {circle(.10cm)};
\fill [color=red](4,2) {circle(.10cm)};
\fill [color=red](5,2) {circle(.10cm)};
\fill [color=red](0,3) {circle(.10cm)};
\fill [color=red](1,3) {circle(.10cm)};
\fill [color=red](2,3) {circle(.10cm)};
\fill [color=red](3,3) {circle(.10cm)};
\fill [color=red](4,3) {circle(.10cm)};
\fill [color=red](0,4) {circle(.10cm)};
\fill [color=red](1,4) {circle(.10cm)};
\fill [color=red](2,4) {circle(.10cm)};
\fill [color=red](3,4) {circle(.10cm)};
\fill [color=red](4,4) {circle(.10cm)};
\fill [color=red](0,5) {circle(.10cm)};
\fill [color=red](1,5) {circle(.10cm)};
\fill [color=red](2,5) {circle(.10cm)};
\fill [color=red](0,6) {circle(.10cm)};
\fill [color=red](1,6) {circle(.10cm)};
\fill [color=red](2,6) {circle(.10cm)};
\draw [color=blue, ultra thick](5,2) circle (7pt);
\draw [color=blue, ultra thick](4,4) circle (7pt);
\draw [color=blue, ultra thick](2,6) circle (7pt);
\end{tikzpicture}
\end{center} 
\caption{}
\end{figure}

From now on, $\mathcal{B}({\mathcal{M}})$ denotes the minimal generating set of $\mathcal{M}.$
We are going to describe properties of the code
$C(\mathcal{S},\mathcal{M})$ in terms of $\mathcal{B}({\mathcal{M}}).$
The following proposition explains how to find a generating set of $\mathcal{M}^c_{\mathcal{S}}$
in terms of $\mathcal{B}({\mathcal{M}}).$
\begin{proposition}\label{19.08.21}
Given a monomial $M=x_1^{a_1}\cdots x_m^{a_m}\in \mathcal{B}({\mathcal{M}}),$ define the monomials
$\displaystyle P(M):=\left\{\frac{x_1^{n_1-1}\cdots x_n^{n_m-1}}{x_i^{a_i-1}}:
 i\in[m], \text{ and } n_i-a_i-2\geq 0 \right\}.$
The set
\[  \operatorname{gcd} \left( P(M)\right)_{M\in \mathcal{B}({\mathcal{M}})}\]
is a generating set of $\mathcal{M}^c.$ The set
$\operatorname{gcd}$ is defined by induction, if $M_1, M_2$ and $M_3$ are elements
of $\mathcal{B}({\mathcal{M}}),$ then
\[\operatorname{gcd}(P(M_1),P(M_2),P(M_3))=
\operatorname{gcd}(\operatorname{gcd}(P(M_1),P(M_2)),P(M_3)),\] where 
$\operatorname{gcd}(P(M_1),P(M_2))=
\{\operatorname{gcd}(M_1^\prime,M_2^\prime): M_1^\prime\in M_1, M_2^\prime \in M_2\}.$
\end{proposition}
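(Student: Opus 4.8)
The plan is to pass to the complement through the order-reversing reflection $\rho(N):=\tfrac{x_1^{n_1-1}\cdots x_m^{n_m-1}}{N}$ that already appears in the previous theorem, identifying each monomial with its exponent vector in the box $\prod_i\{0,\dots,n_i-1\}$. Since $\rho$ is an involution reversing divisibility, it carries $\mathcal{M}^c$ onto the decreasing set $\Delta':=\{\rho(N):N\in\mathcal{M}^c\}$, so a generating set of $\mathcal{M}^c$ corresponds under $\rho$ to a generating set of $\Delta'$, and it suffices to capture the maximal elements of $\Delta'$, equivalently the minimal elements of $\mathcal{M}^c$. First I would record the dictionary coming from $\mathcal{B}(\mathcal{M})$ being the set of maximal elements of $\mathcal{M}$: a monomial $N$ lies in $\mathcal{M}$ iff $N\mid M$ for some $M\in\mathcal{B}(\mathcal{M})$, and hence $N=\prod_i x_i^{c_i}\in\mathcal{M}^c$ iff for every generator $M=\prod_i x_i^{a_i}\in\mathcal{B}(\mathcal{M})$ there is a variable $x_i$ with $c_i\ge a_i+1$. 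This ``escape in some coordinate from every generator'' is exactly a transversal (blocker) condition.

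Next I would translate the $\operatorname{gcd}$-construction through $\rho$. Each element of $P(M)$ is the monomial $x_1^{n_1-1}\cdots x_m^{n_m-1}$ with the exponent of one variable $x_i$ lowered to $n_i-a_i-2$, so its reflection is the single-variable escape $x_i^{a_i+1}$; thus $\rho$ carries $P(M)$ to $\{x_i^{a_i+1}:a_i\le n_i-2\}$ and carries $\operatorname{gcd}$ (componentwise minimum) to $\operatorname{lcm}$ (componentwise maximum). Consequently, choosing one element $x_{i(k)}^{a_{i(k)}^{(k)}+1}$ from each $\rho(P(M_k))$ and taking the $\operatorname{lcm}$ produces the monomial $c$ with $\deg_{x_p}c=1+\max\{a_p^{(k)}:i(k)=p\}$ on the coordinates $p$ in the image of the choice function $i(\cdot)$ and $\deg_{x_p}c=0$ elsewhere. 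The easy inclusion is then immediate: for each $M_k$ the chosen coordinate $i(k)$ witnesses $c_{i(k)}\ge a_{i(k)}^{(k)}+1$, so $c\in\mathcal{M}^c$; reflecting back shows every element of $\operatorname{gcd}(P(M))_{M\in\mathcal{B}(\mathcal{M})}$ lies in $\Delta'$.

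The substantive direction is to show that every minimal element $c$ of $\mathcal{M}^c$ (equivalently every maximal element $\rho(c)$ of $\Delta'$) is realized \emph{exactly} by one such transversal $\operatorname{lcm}$. Here I would argue coordinatewise from minimality: for each $p$ in the support of $c$ one has $c-e_p\in\mathcal{M}$, hence $c-e_p\le M_{k_p}$ for some generator $M_{k_p}$; combining this with $c\not\le M_{k_p}$ forces $a_p^{(k_p)}=c_p-1$ together with $a_i^{(k_p)}\ge c_i$ for all $i\ne p$. This last point is the crux, since it makes $p$ the \emph{unique} coordinate in which $c$ can escape $M_{k_p}$, so any admissible witness for $M_{k_p}$ must equal $p$, and $a_p^{(k_p)}=c_p-1$ attains the maximum. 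Assigning a witness $i(k)$ to every remaining generator (one exists because $c\in\mathcal{M}^c$, and the bound $a_{i(k)}^{(k)}\le n_{i(k)}-2$ needed for the corresponding element of $P(M_k)$ to exist follows from $c_{i(k)}\le n_{i(k)}-1$) yields a choice function whose $\operatorname{lcm}$ has $\deg_{x_p}=c_p$ on the support of $c$ and $0$ off it, i.e. equals $c$.

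Finally, since $\Delta'$ is decreasing by the previous theorem and $\operatorname{gcd}(P(M))_{M\in\mathcal{B}(\mathcal{M})}\subseteq\Delta'$ contains all of its maximal elements, every element of $\Delta'$ divides a maximal one lying in the set, so $\operatorname{gcd}(P(M))_{M\in\mathcal{B}(\mathcal{M})}$ is a generating set of $\Delta'$; unwinding the reflection gives the desired generating set for $\mathcal{M}^c$. I expect the main obstacle to be the uniqueness-of-witness step in the reverse inclusion: one must verify that minimality of $c$ pins down generators $M_{k_p}$ for which $p$ is the only admissible escape coordinate, for otherwise the transversal $\operatorname{lcm}$ would merely dominate $c$ rather than equal it, and the construction would overshoot the minimal elements.
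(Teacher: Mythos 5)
Your proposal is correct and is essentially the paper's own argument carried out in full: both pass to the reflection $N\mapsto x_1^{n_1-1}\cdots x_m^{n_m-1}/N$, decompose $\mathcal{M}^c=\bigcap_{M\in\mathcal{B}(\mathcal{M})}\{M\}^c$ with each factor generated from below by the single-variable escapes $x_i^{a_i+1}$ (the reflections of $P(M)$ --- your silent correction of the paper's $x_i^{a_i-1}$ to $x_i^{a_i+1}$ is the intended reading, as the condition $n_i-a_i-2\geq 0$ and the worked example in Figure 3 confirm), and identify the $\gcd$ construction with transversal $\operatorname{lcm}$'s. The only difference is rigor: your blocker characterization and minimal-element analysis of the reverse inclusion supply exactly the verification that the paper compresses into ``It is clear \dots\ thus the result follows.''
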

\begin{proof}
It is clear that for every monomial
$M=x_1^{a_1}\cdots x_m^{a_m}\in \mathcal{B}({\mathcal{M}})$ the set $P(M)$
is a minimal generating set for $\{ M\}^c.$ Given any two monomials $M_1$ and
$M_2,$ the set $\{ \operatorname{gcd}(M_1,M_2)\}$ is a minimal generating set for the
set of monomials that divide $M_1$ and $M_2,$ thus the result follows.
\end{proof}
It is important to note that the set
$\operatorname{gcd} \left( P(M)\right)_{M\in \mathcal{B}({\mathcal{M}})}$ from
Proposition~\ref{19.08.21} is not always a minimal generating set, as the following example shows.
\begin{example}\rm
Let $K=\mathbb{F}_7,$ $\mathcal{S}=K^2$ and $\mathcal{M}$ the set of monomials of $K[x_1,x_2]$
whose exponents are
the points in the left picture of Example~\ref{19.08.17}.
The circles in the picture of Example~\ref{19.08.20} are the exponents of the monomials
that belong to $ \mathcal{B}({\mathcal{M}}).$ The circles below are the exponents that belong
to \( \operatorname{gcd} \left( P(M)\right)_{M\in \mathcal{B}({\mathcal{M}})}\).
It is clear that it is not a minimal generating set.
\end{example}

\begin{figure}
\begin{center}
\begin{tikzpicture}[scale=0.75]
\draw [-latex] (6,6) -- (6,-0.5)node[below] {$K$};
\draw [dashed] (0,0)node[left]{} -- (6,0)node[right] {6};
\draw [dashed] (0,1)node[left]{} -- (6,1)node[right] {5};
\draw [dashed] (0,2)node[left]{} -- (6,2)node[right] {4};
\draw [dashed] (0,3)node[left]{} -- (6,3)node[right] {3};
\draw [dashed] (0,4)node[left]{} -- (6,4)node[right] {2};
\draw [dashed] (0,5)node[left]{} -- (6,5)node[right] {1};
\draw [dashed] (0,6)node[left]{} -- (6,6)node[right] {};
\draw [-latex] (6,6)node[above right]{0} -- (-0.5,6)node[left] {$K$};
\draw [dashed] (0,0)node[below]{} -- (0,6)node[above] {6};
\draw [dashed] (1,0)node[below]{} -- (1,6)node[above] {5};
\draw [dashed] (2,0)node[below]{} -- (2,6)node[above] {4};
\draw [dashed] (3,0)node[below]{} -- (3,6)node[above] {3};
\draw [dashed] (4,0)node[below]{} -- (4,6)node[above] {2};
\draw [dashed] (5,0)node[below]{} -- (5,6)node[above] {1};
\draw [dashed] (6,0)node[below]{} -- (6,6)node[above] {};
\fill [color=red](0,0) {circle(.10cm)};
\fill [color=red](1,0) {circle(.10cm)};
\fill [color=red](2,0) {circle(.10cm)};
\fill [color=red](3,0) {circle(.10cm)};
\fill [color=red](4,0) {circle(.10cm)};
\fill [color=red](5,0) {circle(.10cm)};
\fill [color=red](0,1) {circle(.10cm)};
\fill [color=red](1,1) {circle(.10cm)};
\fill [color=red](2,1) {circle(.10cm)};
\fill [color=red](3,1) {circle(.10cm)};
\fill [color=red](4,1) {circle(.10cm)};
\fill [color=red](5,1) {circle(.10cm)};
\fill [color=red](0,2) {circle(.10cm)};
\fill [color=red](1,2) {circle(.10cm)};
\fill [color=red](2,2) {circle(.10cm)};
\fill [color=red](3,2) {circle(.10cm)};
\fill [color=red](4,2) {circle(.10cm)};
\fill [color=red](5,2) {circle(.10cm)};
\fill [color=red](0,3) {circle(.10cm)};
\fill [color=red](1,3) {circle(.10cm)};
\fill [color=red](2,3) {circle(.10cm)};
\fill [color=red](3,3) {circle(.10cm)};
\fill [color=red](4,3) {circle(.10cm)};
\fill [color=red](0,4) {circle(.10cm)};
\fill [color=red](1,4) {circle(.10cm)};
\fill [color=red](2,4) {circle(.10cm)};
\fill [color=red](3,4) {circle(.10cm)};
\fill [color=red](4,4) {circle(.10cm)};
\fill [color=red](0,5) {circle(.10cm)};
\fill [color=red](1,5) {circle(.10cm)};
\fill [color=red](2,5) {circle(.10cm)};
\fill [color=red](0,6) {circle(.10cm)};
\fill [color=red](1,6) {circle(.10cm)};
\fill [color=red](2,6) {circle(.10cm)};

\draw [color=blue, ultra thick](3,5) circle (7pt);
\draw [color=blue, ultra thick](6,0) circle (7pt);
\draw [color=blue, ultra thick](6,5) circle (7pt);
\draw [color=blue, ultra thick](5,3) circle (7pt);
\end{tikzpicture}
\end{center}
\caption{}
\end{figure}

    Let $\mathcal{M}_1$ and $\mathcal{M}_1$ be two decreasing sets. Then
    $$\mathcal{B}(\mathcal{M}_1\cap \mathcal{M}_2)=\operatorname{gcd}(\mathcal{B}(\mathcal{M}_1),\mathcal{B}(\mathcal{M}_2))$$ and
    $$\mathcal{B}(\mathcal{M}_1\cup \mathcal{M}_2)=\mathcal{B}(\mathcal{M}_1)\cup\mathcal{B}(\mathcal{M}_2).$$To see this, note that if $M\in\mathcal{M}_1\cap\mathcal{M}_2$, then exists $M_1\in\mathcal{B}(\mathcal{M}_1)$ and $M_2\in\mathcal{B}(\mathcal{M}_2)$, such that $M|M_1$ and $M|M_2$. It follows that $$M|\operatorname{gcd}(M_1,M_2)\in\operatorname{gcd}\mathcal{B}(\mathcal{M}_1),\mathcal{B}(\mathcal{M}_2)).$$Therefore, $\mathcal{M}_1\cap\mathcal{M}_2\subset\operatorname{gcd}(\mathcal{B}(\mathcal{M}_1),\mathcal{B}(\mathcal{M}_2))$. The other containment is clear, as well as the claim for the union.

\begin{theorem}\label{19.08.18}
Consider a monomial-Cartesian code $C(\mathcal{S},\mathcal{M})$ as above. 
\begin{itemize}
\item[\rm (i)] The length of $C(\mathcal{S},\mathcal{M})$ is given by $\prod_{i=1}^m n_i.$
\item[\rm (ii)] The dimension of the code $C(\mathcal{S},\mathcal{M})$ is
\[\sum_{i=1}^{|\mathcal{B}({\mathcal{M}})|} \left( (-1)^{i-1} \sum_{T\in P_i}
\prod_{j=1}^m(t_j+1)
\right), \]
where $P_i \subseteqq \mathcal{B}({\mathcal{M}})$ are those subsets with $|P_i|=i$ and $(t_1,\ldots, t_m)$ is the exponent of $\operatorname{gcd}T$.
\item[\rm (iii)] The minimum distance of $C(\mathcal{S},\mathcal{M})$ is given by
\[ \min \left\{ \prod_{i=1}^m\left(n_i-a_i\right): x_1^{a_1}\cdots x_m^{a_m} \in
\mathcal{B}({\mathcal{M}})\right\}.\]
\end{itemize}
\end{theorem}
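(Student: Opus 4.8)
The plan is to handle the three items separately, reducing everything to the combinatorics of the minimal generating set $\mathcal{B}(\mathcal{M})$. Part (i) is immediate: the length is the number of evaluation points, $|\mathcal{S}| = \prod_{i=1}^m n_i$, as already recorded via \cite[Proposition 2.1]{lopez-matt-sopru}. For part (ii) I would start from the same proposition, which gives $\dim_K C(\mathcal{S},\mathcal{M}) = |\mathcal{M}|$, so the task is purely to count $|\mathcal{M}|$. Since $\mathcal{M}$ is decreasing, it is exactly the union $\bigcup_{B \in \mathcal{B}(\mathcal{M})} D(B)$, where $D(B)$ denotes the set of monomials dividing $B$; and if $B = x_1^{b_1}\cdots x_m^{b_m}$ then $|D(B)| = \prod_{j=1}^m (b_j+1)$. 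The key observation is that for any subset $T \subseteq \mathcal{B}(\mathcal{M})$ the intersection $\bigcap_{B \in T} D(B)$ is precisely the set of divisors of $\operatorname{gcd} T$, so it has cardinality $\prod_{j=1}^m(t_j+1)$ where $(t_1,\dots,t_m)$ is the exponent of $\operatorname{gcd} T$. Applying inclusion--exclusion to the union and grouping the subsets $T$ by their size $i=|T|$ then yields the stated alternating sum.

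For part (iii), write $d_0 := \min\{\prod_{i=1}^m(n_i-a_i): x_1^{a_1}\cdots x_m^{a_m}\in \mathcal{B}(\mathcal{M})\}$; I would prove $d(C(\mathcal{S},\mathcal{M})) = d_0$ by establishing the two inequalities. For the upper bound $d \le d_0$, I would exhibit for each generator $B = x_1^{b_1}\cdots x_m^{b_m}\in \mathcal{B}(\mathcal{M})$ an explicit low-weight codeword: choose subsets $A_i \subseteq S_i$ with $|A_i| = b_i$ (possible since $b_i \le n_i - 1$) and set $f_B = \prod_{i=1}^m \prod_{\alpha \in A_i}(x_i - \alpha)$. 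Every monomial occurring in $f_B$ divides $B$, hence lies in $\mathcal{M}$ because $\mathcal{M}$ is decreasing, so $f_B \in L(\mathcal{M})$; and $f_B(\bm{s}) \ne 0$ precisely when $s_i \notin A_i$ for every $i$, so $\operatorname{ev}_{\mathcal{S}}(f_B)$ has weight exactly $\prod_{i=1}^m(n_i-b_i)$. Minimizing over $B$ gives $d \le d_0$; this is the same computation used for the tensor-product case in \cite{Sop}.

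The hard part is the lower bound $d \ge d_0$, which I would obtain from the footprint (Gr\"obner-basis) bound in the spirit of the minimum-distance computations for affine Cartesian codes \cite{Geil,lopez-villa}. Fix any monomial order and let $f \in L(\mathcal{M})$ be nonzero with leading monomial $\operatorname{lm}(f) = x_1^{a_1}\cdots x_m^{a_m}$; note $\operatorname{lm}(f) \in \mathcal{M}$. The vanishing ideal $I(\mathcal{S})$ contains $\prod_{\alpha \in S_i}(x_i-\alpha)$ with leading monomial $x_i^{n_i}$, so the footprint of $I(\mathcal{S}) + \langle f\rangle$ is contained in the set of monomials divisible neither by any $x_i^{n_i}$ nor by $\operatorname{lm}(f)$, a set of size $n - \prod_{i=1}^m(n_i-a_i)$. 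The footprint bound therefore caps the number of zeros of $f$ on $\mathcal{S}$ by $n - \prod_{i=1}^m(n_i-a_i)$, so $\operatorname{ev}_{\mathcal{S}}(f)$ has weight at least $\prod_{i=1}^m(n_i-a_i)$. Finally, since $\operatorname{lm}(f) \in \mathcal{M}$ divides some $B = x_1^{b_1}\cdots x_m^{b_m}\in \mathcal{B}(\mathcal{M})$, we have $a_i \le b_i$ and hence $\prod_i(n_i-a_i) \ge \prod_i(n_i-b_i) \ge d_0$. Combining the two bounds gives $d(C(\mathcal{S},\mathcal{M})) = d_0$. The main obstacle is setting up the footprint bound correctly --- in particular verifying that the relevant leading-term ideal is generated by the $x_i^{n_i}$ together with $\operatorname{lm}(f)$, so that the footprint has the claimed size; once that is in place, the monotonicity $\operatorname{lm}(f) \mid B$ does the rest.
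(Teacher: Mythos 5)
Your proposal is correct and follows essentially the same route as the paper: part (ii) is inclusion--exclusion over the divisor sets of the minimal generators (intersections being divisor sets of gcds), and part (iii) combines the explicit codeword $f_B=\prod_{i}\prod_{\alpha\in A_i}(x_i-\alpha)$ for the upper bound with a leading-monomial bound plus divisibility monotonicity for the lower bound. The only divergence is that the paper simply cites \cite[Proposition 2.3]{carvalho} (using the graded-lexicographic order) for the inequality $|\operatorname{Supp}(\operatorname{ev}_{\mathcal{S}}f)|\geq \prod_{i=1}^m(n_i-a_i)$ whereas you re-derive it via the footprint bound, which is exactly how such results are proved; note also that your flagged ``main obstacle'' is not actually one, since the footprint bound only needs the leading-term ideal to \emph{contain} the monomials $x_i^{n_i}$ and $\operatorname{lm}(f)$ --- so that the footprint is contained in the complementary set of size $n-\prod_{i=1}^m(n_i-a_i)$ --- not that it be generated by them.
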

\begin{proof}
{\rm (i)} It is clear because $\displaystyle \prod_{i=1}^m n_i$ is the cardinality of $\mathcal{S}.$
{\rm (ii)} Given two monomials $M$ and $M^\prime,$ we see that
$\displaystyle \operatorname{gcd}(M,M^\prime)$ is the minimal generating set of the set of monomials
that divide to $M$ and also to $M^\prime.$ For any monomial
$\displaystyle M=x_1^{t_1}\cdots x_m^{t_m},$
$\displaystyle \prod_{j=1}^n(t_j+1)$ is the number of monomials that divide  $M.$ Thus the dimension
follows from the inclusion exclusion theorem.
{\rm (iii)} Let $\prec$ be the graded-lexicographical order and take
$\displaystyle f\in \operatorname{Span}_K\{M : M \in \mathcal{M} \}.$ If $M=x_1^{b_1}\cdots x_m^{b_m}$
is the leading monomial of $f$. Then \cite[Proposition 2.3]{carvalho} gives 
$\displaystyle |\operatorname{Supp}( \operatorname{ev}_{\mathcal{S}}f)| \geq
\prod_{i=1}^m\left(n_i-b_i\right).$
As $\mathcal{B}({\mathcal{M}})$ is a minimial generating set of $\mathcal{M},$ there exists
$M^\prime=x_1^{a_1}\cdots x_m^{a_m} \in \mathcal{B}({\mathcal{M}})$ such that
$M$ divides $M^\prime.$ Thus
$\displaystyle |\operatorname{Supp}( \operatorname{ev}_{\mathcal{S}}f)| \geq
\prod_{i=1}^m\left(n_i-a_i\right)$
and $\displaystyle d(C(\mathcal{S},\mathcal{M}))\geq \min \left\{ \prod_{i=1}^m\left(n_i-a_i\right):
x_1^{a_1}\cdots x_m^{a_m} \in \mathcal{B}({\mathcal{M}})\right\}.$
Assume for $i\in[m],$ $S_i=\left\{s_{i1},\ldots,s_{i n_i} \right\}.$
Let $x_1^{\alpha_1}\cdots x_m^{\alpha_m} \in \mathcal{B}({\mathcal{M}})$ such that
$\displaystyle \prod_{i=1}^m\left(n_i-\alpha_i\right)=\min \left\{ \prod_{i=1}^m\left(n_i-a_i\right):
x_1^{a_1}\cdots x_m^{a_m} \in \mathcal{B}({\mathcal{M}})\right\}.$
Define \(\displaystyle f_\alpha:=\prod_{i=1}^m\prod_{j=1}^{\alpha_i}\left(x_i-s_{ij} \right).\)
Since $\displaystyle |\operatorname{Supp}( \operatorname{ev}_{\mathcal{S}}f_\alpha)|=
\prod_{i=1}^m\left(n_i-a_i\right)$ and
$\displaystyle f_\alpha\in \operatorname{Span}_K\{M : M \in \mathcal{M} \}$ (as all monomials that
appear in $f_\alpha$ divide $\displaystyle x_1^{\alpha_1}\cdots x_m^{\alpha_m}$), then
we have $\displaystyle d (C(\mathcal{S},\mathcal{M}))\leq \min \left\{ \prod_{i=1}^m\left(n_i-a_i\right):
x_1^{a_1}\cdots x_m^{a_m} \in \mathcal{B}({\mathcal{M}})\right\}$ and the result follows.
\end{proof}
\begin{example}\rm
Let $K=\mathbb{F}_7,$ $\mathcal{S}=K^2$ and $\mathcal{M}$ be the set of monomials of $K[x_1,x_2]$
whose exponents are the points in the leftmost picture of Example~\ref{19.08.17}.
The length of the code is $49$, which is the total number of grid points in $\mathcal{S}.$
The dimension is $34$, which is the total number of points in the leftmost picture of Example~\ref{19.08.17}. 
The minimal generating set $\mathcal{B}({\mathcal{M}})$ is $\{x_1^2x_2^6, x_1^4x_2^4,x_1^5x_2^2\}.$
By Theorem~\ref{19.08.18}
$\displaystyle |\operatorname{Supp}( \operatorname{ev}_{\mathcal{S}}x^2y^6)|\geq 5,$ which is 
the number of grid points between the point $(2,6)$ and the point $(6,6).$ See the first picture
(from left to right) in Figure \ref{fig4}.
In a similar way
$\displaystyle |\operatorname{Supp}( \operatorname{ev}_{\mathcal{S}}x_1^4x_2^4)|\geq 9$
and $\displaystyle |\operatorname{Supp}( \operatorname{ev}_{\mathcal{S}}x_1^5x_2^2)|\geq 10.$
See second and third picture (from left to right) below.
As $\displaystyle \min \left\{ 5,9,10\right\}=5,$ the minimum distance 
$\displaystyle d (C(\mathcal{S},\mathcal{M}))$ is $5.$
\end{example}

\begin{figure} \label{fig4}
\begin{minipage}[t]{0.3\textwidth}
\begin{tikzpicture}[scale=0.45]
\draw [-latex] (0,0) -- (6.5,0)node[right] {};
\draw [dashed] (0,1)node[left]{1} -- (6,1)node[right] {};
\draw [dashed] (0,2)node[left]{2} -- (6,2)node[right] {};
\draw [dashed] (0,3)node[left]{3} -- (6,3)node[right] {};
\draw [dashed] (0,4)node[left]{4} -- (6,4)node[right] {};
\draw [dashed] (0,5)node[left]{5} -- (6,5)node[right] {};
\draw [dashed] (0,6)node[left]{6} -- (6,6)node[right] {};
\draw [-latex] (0,0)node[below left]{0} -- (0,6.5)node[above] {$K$};
\draw [dashed] (1,0)node[below]{1} -- (1,6)node[right] {};
\draw [dashed] (2,0)node[below]{2} -- (2,6)node[right] {};
\draw [dashed] (3,0)node[below]{3} -- (3,6)node[right] {};
\draw [dashed] (4,0)node[below]{4} -- (4,6)node[right] {};
\draw [dashed] (5,0)node[below]{5} -- (5,6)node[right] {};
\draw [dashed] (6,0)node[below]{6} -- (6,6)node[right] {};
\fill [color=red](0,0) {circle(.10cm)};
\fill [color=red](1,0) {circle(.10cm)};
\fill [color=red](2,0) {circle(.10cm)};
\fill [color=red](3,0) {circle(.10cm)};
\fill [color=red](4,0) {circle(.10cm)};
\fill [color=red](5,0) {circle(.10cm)};
\fill [color=red](0,1) {circle(.10cm)};
\fill [color=red](1,1) {circle(.10cm)};
\fill [color=red](2,1) {circle(.10cm)};
\fill [color=red](3,1) {circle(.10cm)};
\fill [color=red](4,1) {circle(.10cm)};
\fill [color=red](5,1) {circle(.10cm)};
\fill [color=red](0,2) {circle(.10cm)};
\fill [color=red](1,2) {circle(.10cm)};
\fill [color=red](2,2) {circle(.10cm)};
\fill [color=red](3,2) {circle(.10cm)};
\fill [color=red](4,2) {circle(.10cm)};
\fill [color=red](5,2) {circle(.10cm)};
\fill [color=red](0,3) {circle(.10cm)};
\fill [color=red](1,3) {circle(.10cm)};
\fill [color=red](2,3) {circle(.10cm)};
\fill [color=red](3,3) {circle(.10cm)};
\fill [color=red](4,3) {circle(.10cm)};
\fill [color=red](0,4) {circle(.10cm)};
\fill [color=red](1,4) {circle(.10cm)};
\fill [color=red](2,4) {circle(.10cm)};
\fill [color=red](3,4) {circle(.10cm)};
\fill [color=red](4,4) {circle(.10cm)};
\fill [color=red](0,5) {circle(.10cm)};
\fill [color=red](1,5) {circle(.10cm)};
\fill [color=red](2,5) {circle(.10cm)};
\fill [color=red](0,6) {circle(.10cm)};
\fill [color=red](1,6) {circle(.10cm)};
\fill [color=red](2,6) {circle(.10cm)};
\draw[draw=blue, ultra thick] (1.5,5.5) rectangle ++(5,1);
\draw (2,6)node{$\textcolor{blue}{\bm \times}$};
\draw (3,6)node{$\textcolor{blue}{\bm \times}$};
\draw (4,6)node{$\textcolor{blue}{\bm \times}$};
\draw (5,6)node{$\textcolor{blue}{\bm \times}$};
\draw (6,6)node{$\textcolor{blue}{\bm \times}$};
\end{tikzpicture}
\end{minipage}
\begin{minipage}[t]{0.3\textwidth}
\begin{tikzpicture}[scale=0.45]
\draw [-latex] (0,0) -- (6.5,0)node[right] {};
\draw [dashed] (0,1)node[left]{1} -- (6,1)node[right] {};
\draw [dashed] (0,2)node[left]{2} -- (6,2)node[right] {};
\draw [dashed] (0,3)node[left]{3} -- (6,3)node[right] {};
\draw [dashed] (0,4)node[left]{4} -- (6,4)node[right] {};
\draw [dashed] (0,5)node[left]{5} -- (6,5)node[right] {};
\draw [dashed] (0,6)node[left]{6} -- (6,6)node[right] {};
\draw [-latex] (0,0)node[below left]{0} -- (0,6.5)node[above] {$K$};
\draw [dashed] (1,0)node[below]{1} -- (1,6)node[right] {};
\draw [dashed] (2,0)node[below]{2} -- (2,6)node[right] {};
\draw [dashed] (3,0)node[below]{3} -- (3,6)node[right] {};
\draw [dashed] (4,0)node[below]{4} -- (4,6)node[right] {};
\draw [dashed] (5,0)node[below]{5} -- (5,6)node[right] {};
\draw [dashed] (6,0)node[below]{6} -- (6,6)node[right] {};
\fill [color=red](0,0) {circle(.10cm)};
\fill [color=red](1,0) {circle(.10cm)};
\fill [color=red](2,0) {circle(.10cm)};
\fill [color=red](3,0) {circle(.10cm)};
\fill [color=red](4,0) {circle(.10cm)};
\fill [color=red](5,0) {circle(.10cm)};
\fill [color=red](0,1) {circle(.10cm)};
\fill [color=red](1,1) {circle(.10cm)};
\fill [color=red](2,1) {circle(.10cm)};
\fill [color=red](3,1) {circle(.10cm)};
\fill [color=red](4,1) {circle(.10cm)};
\fill [color=red](5,1) {circle(.10cm)};
\fill [color=red](0,2) {circle(.10cm)};
\fill [color=red](1,2) {circle(.10cm)};
\fill [color=red](2,2) {circle(.10cm)};
\fill [color=red](3,2) {circle(.10cm)};
\fill [color=red](4,2) {circle(.10cm)};
\fill [color=red](5,2) {circle(.10cm)};
\fill [color=red](0,3) {circle(.10cm)};
\fill [color=red](1,3) {circle(.10cm)};
\fill [color=red](2,3) {circle(.10cm)};
\fill [color=red](3,3) {circle(.10cm)};
\fill [color=red](4,3) {circle(.10cm)};
\fill [color=red](0,4) {circle(.10cm)};
\fill [color=red](1,4) {circle(.10cm)};
\fill [color=red](2,4) {circle(.10cm)};
\fill [color=red](3,4) {circle(.10cm)};
\fill [color=red](4,4) {circle(.10cm)};
\fill [color=red](0,5) {circle(.10cm)};
\fill [color=red](1,5) {circle(.10cm)};
\fill [color=red](2,5) {circle(.10cm)};
\fill [color=red](0,6) {circle(.10cm)};
\fill [color=red](1,6) {circle(.10cm)};
\fill [color=red](2,6) {circle(.10cm)};
\draw[draw=blue, ultra thick] (3.5,3.5) rectangle ++(3,3);
\draw (4,4)node{$\textcolor{blue}{\bm \times}$};
\draw (5,4)node{$\textcolor{blue}{\bm \times}$};
\draw (6,4)node{$\textcolor{blue}{\bm \times}$};
\draw (4,5)node{$\textcolor{blue}{\bm \times}$};
\draw (5,5)node{$\textcolor{blue}{\bm \times}$};
\draw (6,5)node{$\textcolor{blue}{\bm \times}$};
\draw (4,6)node{$\textcolor{blue}{\bm \times}$};
\draw (5,6)node{$\textcolor{blue}{\bm \times}$};
\draw (6,6)node{$\textcolor{blue}{\bm \times}$};

\end{tikzpicture}
\end{minipage}
\begin{minipage}[t]{0.3\textwidth}
\begin{tikzpicture}[scale=0.45]
\draw [-latex] (0,0) -- (6.5,0)node[right] {};
\draw [dashed] (0,1)node[left]{1} -- (6,1)node[right] {};
\draw [dashed] (0,2)node[left]{2} -- (6,2)node[right] {};
\draw [dashed] (0,3)node[left]{3} -- (6,3)node[right] {};
\draw [dashed] (0,4)node[left]{4} -- (6,4)node[right] {};
\draw [dashed] (0,5)node[left]{5} -- (6,5)node[right] {};
\draw [dashed] (0,6)node[left]{6} -- (6,6)node[right] {};
\draw [-latex] (0,0)node[below left]{0} -- (0,6.5)node[above] {$K$};
\draw [dashed] (1,0)node[below]{1} -- (1,6)node[right] {};
\draw [dashed] (2,0)node[below]{2} -- (2,6)node[right] {};
\draw [dashed] (3,0)node[below]{3} -- (3,6)node[right] {};
\draw [dashed] (4,0)node[below]{4} -- (4,6)node[right] {};
\draw [dashed] (5,0)node[below]{5} -- (5,6)node[right] {};
\draw [dashed] (6,0)node[below]{6} -- (6,6)node[right] {};
\fill [color=red](0,0) {circle(.10cm)};
\fill [color=red](1,0) {circle(.10cm)};
\fill [color=red](2,0) {circle(.10cm)};
\fill [color=red](3,0) {circle(.10cm)};
\fill [color=red](4,0) {circle(.10cm)};
\fill [color=red](5,0) {circle(.10cm)};
\fill [color=red](0,1) {circle(.10cm)};
\fill [color=red](1,1) {circle(.10cm)};
\fill [color=red](2,1) {circle(.10cm)};
\fill [color=red](3,1) {circle(.10cm)};
\fill [color=red](4,1) {circle(.10cm)};
\fill [color=red](5,1) {circle(.10cm)};
\fill [color=red](0,2) {circle(.10cm)};
\fill [color=red](1,2) {circle(.10cm)};
\fill [color=red](2,2) {circle(.10cm)};
\fill [color=red](3,2) {circle(.10cm)};
\fill [color=red](4,2) {circle(.10cm)};
\fill [color=red](5,2) {circle(.10cm)};
\fill [color=red](0,3) {circle(.10cm)};
\fill [color=red](1,3) {circle(.10cm)};
\fill [color=red](2,3) {circle(.10cm)};
\fill [color=red](3,3) {circle(.10cm)};
\fill [color=red](4,3) {circle(.10cm)};
\fill [color=red](0,4) {circle(.10cm)};
\fill [color=red](1,4) {circle(.10cm)};
\fill [color=red](2,4) {circle(.10cm)};
\fill [color=red](3,4) {circle(.10cm)};
\fill [color=red](4,4) {circle(.10cm)};
\fill [color=red](0,5) {circle(.10cm)};
\fill [color=red](1,5) {circle(.10cm)};
\fill [color=red](2,5) {circle(.10cm)};
\fill [color=red](0,6) {circle(.10cm)};
\fill [color=red](1,6) {circle(.10cm)};
\fill [color=red](2,6) {circle(.10cm)};
\draw[draw=blue, ultra thick] (4.5,1.5) rectangle ++(2,5);
\draw (5,2)node{$\textcolor{blue}{\bm \times}$};
\draw (5,3)node{$\textcolor{blue}{\bm \times}$};
\draw (5,4)node{$\textcolor{blue}{\bm \times}$};
\draw (5,5)node{$\textcolor{blue}{\bm \times}$};
\draw (5,6)node{$\textcolor{blue}{\bm \times}$};
\draw (6,2)node{$\textcolor{blue}{\bm \times}$};
\draw (6,3)node{$\textcolor{blue}{\bm \times}$};
\draw (6,4)node{$\textcolor{blue}{\bm \times}$};
\draw (6,5)node{$\textcolor{blue}{\bm \times}$};
\draw (6,6)node{$\textcolor{blue}{\bm \times}$};
\end{tikzpicture}
\end{minipage}
\caption{}
\end{figure}

\section{Polar codes that are polar decreasing monomial-Cartesian codes} \label{polar_dec_section}
In this section we are going to represent families of polar codes in terms of the just defined decreasing monomial-Cartesian codes.
Throughout this section, we continue with the same notation so that $K$ represents the finite field $\mathbb{F}_q,$ $R:=K[x_1,\ldots,x_m]$ is the polynomial ring over $K$ in $m$ variables, $\mathcal{M}\subseteq R$ is a set of monomials that is {decreasing},  $S_1,\ldots,S_m$ are subsets of $K,$ $\mathcal{S}$ represents the Cartesian set $\mathcal{S}=S_1\times\cdots\times S_m,$ $n_i=|S_i|$ for $i\in [m],$ $n=|\mathcal{S}|$, and $C(\mathcal{S},\mathcal{M})$ represents the {decreasing monomial-Cartesian code} associated to $\mathcal{S}$ and $\mathcal{M}$.

We associate the following matrix to a set $S=\{a_1,\ldots,a_l\}\subseteq\mathbb{F}_q:$
\[T(S)=\begin{blockarray}{cccccc}
&a_1&a_2&\cdots&a_l\\
\begin{block}{c[ccccc]}
x^{l-1}&a_1^{l-1}&a_2^{l-1}&\cdots&a_l^{l-1}\\
\vdots&\vdots&\vdots&\ddots&\vdots\\
x&a_1&a_2&\cdots&a_l\\
1&1&1&\cdots&1\\
\end{block}
\end{blockarray}.\]
Notice that $T(S)$ is invertible, it has a non identity standard form and it is a generator matrix of the decreasing monomial-Cartesian code $C({S},\{1,\ldots,x^{l-1}\}).$ Take $S_1,S_2,\ldots,S_m\subseteq K$ and let $T_i=T(S_i)$. If $S_i=\{a_{i1},\ldots,a_{in_i}\}$, we can order the set $\mathcal{S}=S_1\times\cdots\times S_m$ with the order inherited from the lexicographical order; i.e.,          \[(a_{1j_1},\ldots,a_{mj_m})\preceq (a_{1h_1},\ldots,a_{mh_m})\Longleftrightarrow j_k<h_k,$$ where $$k=\min\{r\in\{1,\ldots,m\}\ |\ a_{rj_r}\neq a_{rh_r}\}.\]
Let $\mathcal{M}=\{x_1^{a_1}\cdots x_m^{a_m}\ |\ a_i\leq n_i-1,\ 1\leq i\leq m\}$ and order this set with the inverse lexicographical order. Then we have that 
\[G_m=B_m(T_1\otimes \cdots\otimes T_m),\]
where $B_m$ is the permutation matrix that sends the row $\displaystyle j=k_m+\sum_{i=1}^{m-1}k_in_{i+1}$ to the row $\displaystyle  j'=k_1+\sum_{i=2}^m k_in_{i-1},$ has as rows the evaluations $ev_\mathcal{S}$ of $\mathcal{M}$ in decreasing order.
\begin{example}\rm
    Let $\alpha$ be a primitive element of $\mathbb{F}_4$ and $S_1=\{0,1,\alpha\}$, $S_2=\mathbb{F}_4$. Then
    $$T_1=\begin{blockarray}{ccccc}
    &0&1&\alpha\\
    \begin{block}{c[cccc]}
    x^2&0&1&\alpha^2\\
    x&0&1&\alpha\\
    1&1&1&1\\
    \end{block}\end{blockarray}\hspace{1cm}T_2=\begin{blockarray}{cccccc}
    &0&1&\alpha&\alpha^2\\
    \begin{block}{c[ccccc]}
    y^3&0&1&1&1\\
    y^2&0&1&\alpha^2&\alpha\\
    y&0&1&\alpha&\alpha^2\\
    1&1&1&1&1\\
    \end{block}
    \end{blockarray}$$

    Therefore,
    $$G_2=\begin{blockarray}{cccccccccccccc}
    &00&01&0\alpha&0\alpha^2&10&11&1\alpha&1\alpha^2&\alpha0&\alpha1&\alpha\alpha&\alpha\alpha^2\\
    \begin{block}{c[ccccccccccccc]}
y^3x^2&0&      0&      0&      0&      0&      1&      1&      1&      0&      \alpha^2&      \alpha^2&      \alpha^2\\
y^3x&0&      0&      0&      0&      0&      1&      1&      1&      0&      \alpha&      \alpha&      \alpha\\
y^3&0&      1&      1&      1&      0&      1&      1&      1&      0&      1&      1&      1\\
y^2x^2&0&      0&      0&      0&      0&      1&      \alpha^2&      \alpha&      0&      \alpha^2&      \alpha&      1\\
y^2x&0&      0&      0&      0&      0&      1&      \alpha^2&      \alpha&      0&      \alpha&      1&      \alpha^2\\
y^2&0&      1&      \alpha^2&      \alpha&      0&      1&      \alpha^2&      \alpha&      0&      1&      \alpha^2&      \alpha\\
yx^2&0&      0&      0&      0&      0&      1&      \alpha&      \alpha^2&      0&      \alpha^2&      1&      \alpha\\
yx&0&      0&      0&      0&      0&      1&      \alpha&      \alpha^2&      0&      \alpha&      \alpha^2&      1\\
y&0&      1&      \alpha&      \alpha^2&      0&      1&      \alpha&      \alpha^2&      0&      1&      \alpha&      \alpha^2\\
x^2&0&      0&      0&      0&      1&      1&      1&      1&      \alpha^2&      \alpha^2&      \alpha^2&      \alpha^2\\
x&0&      0&      0&      0&      1&      1&      1&      1&      \alpha&      \alpha&      \alpha&      \alpha\\
1&1&      1&      1&      1&      1&      1&      1&      1&      1&      1&      1&      1\\
\end{block}
\end{blockarray}$$
\end{example}
Since each row of $G_m$ can be viewed as a monomial, by an abuse of notation, for a monomial $M\in\mathcal{M}$, we can write $I(M)$ and $Z(M)$ for $I\left(W_m^{(i)}\right)$ and $Z\left(W_m^{(i)}\right)$ respectively, where
\[Row_i G_m=ev_\mathcal{S}(M).\]

	In the usual polarization process, for a square matrix $G \in \mathbb{F}_q^{l \times l}$, the speed of polarization is measured via the exponent. This is defined as the number $E(G)$ such that for any channel $W$ the following hold.
	\begin{enumerate}
		\item[(i)] For any fixed $\beta<E(G)$, 
		$$\liminf_{n\rightarrow\infty} P[Z_n\leq 2^{-l^{n\beta}}]=I(W).$$
		
		\item[(ii)] For any fixed $\beta>E(G)$,
		$$\liminf_{n\rightarrow\infty}P[Z_n\geq 2^{-l^{n\beta}}]=1.$$
	\end{enumerate}
	
	Therefore, if $D_j=d(Row_j G,\langle Row_{j+1} G,\ldots,Row_l G\rangle)$, then 
	$$E(G)=\sum_{j=1}^l\frac{\ln D_j}{l\ln l}.$$

\begin{remark}\rm
A lower bound on the exponent of the matrix $G_m$ can be calculated directly from the set of monomials as follows: 
\begin{eqnarray*}
E(G_m)=\sum_{j=1}^l\frac{\ln D_j}{l\ln l}&=&\sum_{j=1}^l\frac{\ln d(Row_j G_m,\langle Row_{j+1} G_m,\ldots,Row_l G_m\rangle)}{l\ln l}\\
&=&\sum_{j=1}^l\frac{\ln d(Row_j G_m,Row_{j+1} G_m,\ldots,Row_l G_m)}{l\ln l}\\
&\geq &\frac{1}{l\ln l}\sum_{j=1}^l \ln \left[\min \left\{ \prod_{i=1}^m\left(n_i-a_i\right): x_1^{a_1}\cdots x_m^{a_m} \in
\mathcal{B}({\mathcal{M}^j})\right\} \right],
\end{eqnarray*}
where ${\mathcal{M}^j}$ represents the last $j$ monomials of the set $\mathcal{M}$ according to the inverse lexicographical order.
\end{remark}

\begin{remark}\rm
	In \cite{Vardohus} was proven that if $G_1$ and $G_2$ are two square non-singular matrices over $\mathbb{F}_q$, of sizes $l_1$ and $l_2$ respectively, then
	$$E(G_1\otimes G_2)=\frac{E(G_1)}{\log_{l_1}(l_1l_2)}+\frac{E(G_2)}{\log_{l_2}(l_1l_2)}.$$
	
	From this we have that
	\begin{equation}
	E(G_1\otimes\cdots\otimes G_s)=\sum_{j=1}^s\frac{E(G_j)}{\log_{l_j}(l_1\cdots l_s)}.\tag{$\ast$}
	\end{equation}
	
	Redefining in the obvious way the exponent for the multikernel process, in \cite{mk2} the authors proved that if $T_1,\ldots,T_s$ are kernels with size $l_1,\ldots,l_s$ and exponents $E_1,\ldots,E_s$ are used to construct a multikernel polar code in which each $T_j$ appears with frequency $p_j$ on $G_N$ (the Kronecker product of these matrices) as $N\rightarrow\infty$, then the exponent of the multikernel process is
	$$E=\sum_{j=1}^s \frac{p_j\log_2(l_j)}{\sum_{k=1}^s p_k\log_2(l_k)}E_j,$$ which results to be
	$$E=\lim_{N\rightarrow\infty} E(G_N),$$ because of $(\ast)$. 
	
	For the case we are working on, each $T_i$ has size $l_i\leq q$ and we know $E(T_i)=\frac{\ln l_i!}{l_i\ln l_i}$, which is the best exponent we can get over all the matrices of size $l_i$. Given that for $G_m=B_m(T_1\otimes\cdots\otimes T_m)$ there exists a matrix permutation $P$ such that $G_mP=T_1\otimes \cdots\otimes T_m$ and     			$$E(G_m)=E(T_1\otimes\cdots\otimes T_m)=\sum_{i=1}^m \frac{E(T_i)}{\log_{l_i}(l_1\cdots l_m)}.$$
	
	Therefore, for any other matrix $G=M_1\otimes\cdots\otimes M_m$, such that $M_i$ is a square matrix of size $l_i$, $E(G)\leq E(G_m)$. Even more, for any sequence $\{T_i\}_{i=1}^\infty$, where $T_i$ is associated to a subset from $\mathbb{F}_q$, we have
	$$\lim_{n\rightarrow\infty}\sum_{k=1}^n\frac{E(T_k)}{\ln(l_1\cdots l_k)}\leq\frac{\ln q!}{q\ln q}.$$
	
	This suggests that the result exposed in \cite{mk2} could be generalized for this case.
\end{remark}

Let us continue with the description of information sets of polar codes.
The following monomial order is inspired by the order introduced in~\cite{Bardet}. They coincide when $K=\mathbb{F}_2$ and $S_1=\cdots=S_m=\mathbb{F}_2.$

The following definition is the key to define polar decreasing monomial-Cartesian codes in in terms of decreasing monomial-Cartesian codes.
\begin{definition}\rm\label{19.12.05}
Let $S_1,\ldots,S_m \subseteq K$ and $M, M', \tilde{M}, \tilde{M}'$ be monomials in $R.$ Define the monomial order $\trianglelefteq$ in $R$ as follows. 
    \begin{enumerate}
        \item[(i)] If $M'|M$, then $M'\trianglelefteq M$. 
        
        \item[(ii)] Suppose $S_{i_1}=\cdots=S_{i_r}$. Given 
        $\{j_1,\ldots,j_s\},\{h_1,\ldots,h_s\}\subseteq\{i_1,\ldots,i_r\}$ with $j_l <j_{l+1}$, $h_l < h_{l+1}$, fir $l=1, \dots, s-1$, and $i_l<i_{l+1}$ for $l=1, \dots, r-1$,
        \[x_{j_1}^{a_1}\cdots x_{j_s}^{a_s}\trianglelefteq x_{h_1}^{a_1}\cdots x_{h_s}^{a_s}\] if and only if $j_k\leq h_k$ for all $1\leq k\leq s$.
        
        \item[(iii)] Let $1\leq k \leq m-1.$ For $M,M'\in K[x_1,\ldots,x_k], \tilde{M},\tilde{M}'\in K[x_{k+1},\ldots,x_m]$, if $M\trianglelefteq M'$ and
        $\tilde{M}\trianglelefteq\tilde{M}',$ then
        \[M\tilde{M}\trianglelefteq M'\tilde{M}'.\]
        
    \end{enumerate}
\begin{example}\rm
Over $\mathbb{F}_5$, take $S_1=S_2=\{0,1,2\}$ and $S_3=\mathbb{F}_5$. As $x_3|x_2^2x_3$, then $x_3\trianglelefteq x_2^2x_3.$ Since $S_2=S_1$, then $x_1\trianglelefteq x_2$. Finally, since $x_1\trianglelefteq x_2,$ then $x_1x_3\trianglelefteq x_2x_3.$
\end{example}
A {\bf polar decreasing monomial-Cartesian code}  is a decreasing monomial-Cartesian code $C(\mathcal{S},\mathcal{M}),$ where $\mathcal{M}$ is closed under $\trianglelefteq.$
\end{definition}
\begin{lemma}\cite[Propositions 15, 20 and 27]{Vardohus}
Let $\{T_i\}_{i=1}^m$ be the sequence of the associated matrices to a sequence of sets $\{S_i\}_{i=1}^m$ of $K$. Let $G_n=B_n(T_1\otimes\cdots\otimes T_m)$
as before. If $M,M'\in K[x_1,\ldots,x_m]$ and $M\trianglelefteq M'$, then
\[I(M)\geq I(M') \qquad  \text{ and } \qquad Z(M)\leq Z(M').\]
\end{lemma}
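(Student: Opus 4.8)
The plan is to argue by structural induction on how the relation $M \trianglelefteq M'$ is derived from the three rules of Definition~\ref{19.12.05}, treating (i) and (ii) as the base cases and (iii) as the combination step, and using transitivity of $\geq$ (for $I$) and of $\leq$ (for $Z$) to chain comparisons together. Throughout I would phrase everything in the language of stochastic degradation, writing $W_2 \preceq W_1$ when $W_2$ is a degraded version of $W_1$: recall that $W_2 \preceq W_1$ implies $I(W_2) \leq I(W_1)$ and $Z(W_2) \geq Z(W_1)$, and that degradation is preserved under every synthetic-channel transform $W \mapsto W_m^{(i)}$. In this language the lemma reduces to showing that $M \trianglelefteq M'$ forces the synthetic channel indexed by $M$ to upgrade the one indexed by $M'$.

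For rule (i) (divisibility) I would first treat a single stage. The exponent $a_i \in \{0,\dots,n_i-1\}$ of the monomial in $x_i$ selects one of the $n_i$ rows of $T_i = T(S_i)$, hence one of the $n_i$ subchannels produced at stage $i$; the first step is to show these subchannels are totally ordered by degradation according to the row index, a larger exponent giving a more degraded channel. This single-kernel stochastic ordering follows from the recursive definition of $G_m$ together with the standard one-step polarization inequalities. Since $M' \mid M$ is exactly a coordinatewise inequality of exponents, applying the single-stage comparison in each coordinate and chaining the resulting degradations through the iterated construction of $G_m$ yields the desired comparison for this case.

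The heart of the argument is rule (ii). Because $S_{i_1} = \cdots = S_{i_r}$, the corresponding kernels coincide, so the monomials $x_{j_1}^{a_1}\cdots x_{j_s}^{a_s}$ and $x_{h_1}^{a_1}\cdots x_{h_s}^{a_s}$ index the same collection of polarization steps applied in two different orders. I would reduce the general hypothesis $j_k \le h_k$ to a finite sequence of adjacent transpositions (a bubble-sort argument), so that the whole case follows from a single exchange lemma: for two consecutive stages carrying the same kernel, performing the less reliable branch at the earlier stage and the more reliable branch at the later stage produces a channel degraded with respect to the opposite assignment. This exchange inequality is the generalization to $q$-ary Reed--Solomon kernels of the key comparison of Bardet, Dragoi, Otmani and Tillich, and proving it requires exhibiting an explicit degrading map coupling the two two-stage channels; this is the step I expect to be the main obstacle.

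Finally, rule (iii) is the inductive combination step, handled through the multiplicative structure $G_m = B_m(T_1 \otimes \cdots \otimes T_m)$. Splitting the variables into the blocks $\{x_1,\dots,x_k\}$ and $\{x_{k+1},\dots,x_m\}$, the synthetic channel attached to a product monomial $M\tilde M$ is obtained by carrying out the polarization steps of the second block and then those of the first, so that degradation relations established separately on the two blocks combine: if the inductive hypothesis gives $W_m^{(M)} \succeq W_m^{(M')}$ and $W_m^{(\tilde M)} \succeq W_m^{(\tilde M')}$, then preservation of degradation under the remaining transforms yields $W_m^{(M\tilde M)} \succeq W_m^{(M'\tilde M')}$. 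Translating back to $I$ and $Z$ and invoking transitivity over all generating comparisons completes the proof.
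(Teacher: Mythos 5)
First, a note on what you are being compared against: the paper does not prove this lemma at all. It is imported verbatim from \cite[Propositions 15, 20 and 27]{Vardohus}, and the three cited propositions correspond essentially to the three rules of Definition~\ref{19.12.05}. So an honest self-contained proof must supply the content of those propositions; your proposal instead reduces the lemma to them and stops.

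Concretely, two of your three steps are assertions of precisely the facts that need proving. For rule (i) you claim that the $n_i$ channels synthesized by a single kernel $T(S_i)$ are totally ordered by degradation according to row index, and that this ``follows from the recursive definition of $G_m$ together with the standard one-step polarization inequalities.'' The standard inequalities ($Z(W^+)\le Z(W)\le Z(W^-)$, $I(W^-)\le I(W)\le I(W^+)$) are a feature of the $2\times 2$ Arikan kernel; for an $l\times l$ kernel with $l>2$ there is no generic ordering of the synthesized channels, and whether the row order of the Vandermonde-type matrix $T(S)$ agrees with the degradation order is exactly the nontrivial content of the cited propositions --- it depends on the SOF symmetry of $W$ and on the algebraic structure of $T(S)$, and it is false for arbitrary invertible kernels. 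So this base case is circular as written. For rule (ii) you correctly isolate the exchange lemma for two consecutive stages carrying equal kernels as the crux, and then explicitly defer it (``the step I expect to be the main obstacle''); exhibiting the degrading map over $\F_q$ there is the heart of the whole lemma, and without it nothing has been proved. Only your treatment of rule (iii) --- composing degradation block by block and using that degradation is preserved under further polarization transforms --- is complete as stated, and that part is the routine one. In short, the scaffold is sensible (it mirrors how \cite{Bardet} argues in the binary case and how \cite{Vardohus} argues in the $q$-ary case), and the reduction of rule (ii) to adjacent transpositions is a good idea, but both base cases are genuine gaps: the proposal ultimately assumes the substance of the very statement it sets out to prove.
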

If we represent the set $\mathcal{A}_m$ given in Definition~\ref{19.12.07} not as indexes of rows, but as monomials, then we have the next characterization of $\mathcal{A}_m$.
\begin{proposition}
Let $\{T_i\}_{i=1}^m$ be the sequence of  matrices associated with a sequence of sets $\{S_i\}_{i=1}^m$ of $K$. Let $\mathcal{A}_m$ be an information set given in Definition~\ref{19.12.07} by the sequence $\{T_i\}_{i=1}^m$. If $M\in\mathcal{A}_m$ and $M'\trianglelefteq M$, then $M'\in\mathcal{A}_m$.
\end{proposition}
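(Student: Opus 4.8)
The plan is to prove that the given information set $\mathcal{A}_m$ is a down-set for the order $\trianglelefteq$. I would begin by recording the two ingredients that interact. From Definition~\ref{19.12.07}, the defining property of $\mathcal{A}_m$ is that $Z(N)\le Z(N')$ for every monomial $N\in\mathcal{A}_m$ and every monomial $N'\notin\mathcal{A}_m$; that is, every reliability inside $\mathcal{A}_m$ dominates every reliability outside it. From the Lemma immediately preceding the statement, the hypothesis $M'\trianglelefteq M$ yields the monotonicity inequality $Z(M')\le Z(M)$. The proof is then a short contradiction combining these two facts.

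Assume $M\in\mathcal{A}_m$ and $M'\trianglelefteq M$, and suppose toward a contradiction that $M'\notin\mathcal{A}_m$. Applying the defining inequality of Definition~\ref{19.12.07} to $N=M\in\mathcal{A}_m$ and $N'=M'\notin\mathcal{A}_m$ gives $Z(M)\le Z(M')$, while the preceding Lemma gives $Z(M')\le Z(M)$; together these force
\[Z(M')=Z(M).\]
If the monotonicity inequality were strict, this equality would be impossible and the proof would be finished. Hence the whole content of the statement is concentrated in the boundary case $Z(M')=Z(M)$, and---crucially---it must be settled for the prescribed $\mathcal{A}_m$, since the $\le$ in Definition~\ref{19.12.07} does not by itself forbid keeping $M$ in while leaving a $\trianglelefteq$-smaller $M'$ out when their Bhattacharyya values tie.

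The main obstacle is exactly this tie case, and the route I would take is to upgrade the preceding Lemma to a strict statement: $M'\trianglelefteq M$ with $M'\neq M$ implies $Z(M')<Z(M)$. This strictness makes the boundary case vacuous for distinct comparable monomials, so the contradiction above applies verbatim to the given $\mathcal{A}_m$ with no tie-breaking convention. To establish it I would revisit the three clauses of Definition~\ref{19.12.05} through the Bhattacharyya recursion for the split channels under the symmetric-over-the-field hypothesis of Lemma~\ref{19.12.06}. For clause~(i) a proper divisor $M'\mid M$ corresponds to one additional favourable splitting step, where the recursion $Z^{-}\le Z\le Z^{+}$ is strict away from the fixed points $Z\in\{0,1\}$; at those fixed points $M$ and $M'$ occupy the same extremal level and cannot be separated by $\mathcal{A}_m$, so the conclusion still holds. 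Clause~(iii) is multiplicative and, via the product structure of the split channels, reduces the general situation to clauses~(i) and~(ii). The delicate step I expect to be hardest is clause~(ii), where $M$ and $M'$ carry the same exponent multiset on different but equal coordinates $S_{i_1}=\cdots=S_{i_r}$: here I would use the symmetry permutations $\sigma_a,\psi_a$ together with a coordinate relabelling to exhibit the two synthetic channels as related by a degradation, and then argue that pushing the occupied indices strictly to the right strictly degrades reliability unless the index tuples coincide. Proving that this degradation is strict, rather than the mere inequality stated in the cited Lemma, is the crux of the argument.
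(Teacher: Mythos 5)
Your first paragraph reconstructs exactly the argument the paper intends: the paper in fact prints no proof at all for this proposition (it is presented as an immediate consequence of the preceding Lemma), and the implicit reasoning is precisely your contradiction -- if $M\in\mathcal{A}_m$, $M'\trianglelefteq M$ and $M'\notin\mathcal{A}_m$, then Definition~\ref{19.12.07} gives $Z(M)\le Z(M')$ while the Lemma gives $Z(M')\le Z(M)$. You are also right that this only forces $Z(M')=Z(M)$, so the tie case is genuinely open; that is an imprecision in the paper itself, not an artifact of your reading.

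The gap is in your proposed repair. The strict inequality $Z(M')<Z(M)$ for distinct $\trianglelefteq$-comparable monomials is false in general: take $W$ itself noiseless. Since $G_m$ is invertible, every synthetic channel $W_m^{(i)}$ is then noiseless, every Bhattacharyya parameter equals $0$, and Definition~\ref{19.12.07} certifies \emph{every} subset of $\{1,\ldots,n\}$ as an information set -- in particular one containing the row indexed by $x_1$ but omitting the row indexed by the monomial $1$, even though $1\trianglelefteq x_1$. This same example refutes your fallback claim that tied monomials ``cannot be separated by $\mathcal{A}_m$'': the definition imposes only weak inequalities across the boundary, and a tie satisfies them no matter how it is split, so separation is perfectly legal. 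Consequently no strengthening of the monotonicity Lemma can prove the proposition for an \emph{arbitrary} information set, and the hardest step of your program (strict degradation in clause (ii) of Definition~\ref{19.12.05}, which you yourself flag as unproven) cannot hold at degenerate channels. What the statement actually needs -- and what the paper tacitly intends, given the sentence that follows it about polar codes being polar decreasing monomial-Cartesian codes -- is a one-line tie-breaking convention rather than a strictness lemma: list the monomials in nondecreasing order of $Z$, breaking ties by a linear extension of the partial order $\trianglelefteq$, and take $\mathcal{A}_m$ to be an initial segment of this list. By the Lemma, any $M'\trianglelefteq M$ has $Z(M')\le Z(M)$, so $M'$ precedes $M$ in the list (strictly smaller $Z$, or equal $Z$ and earlier in the linear extension), and the resulting $\mathcal{A}_m$ is closed under $\trianglelefteq$. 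Read as ``the information set can be so chosen,'' the proposition is correct and your opening contradiction plus this selection argument is a complete proof; read literally for every information set, it is false, and your strictness route cannot rescue it.
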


The immediate consequence is that any polar code constructed from a sequence of subsets of $K$ is a polar decreasing monomial-Cartesian code.

\begin{theorem}\label{19.12.04}
Let $\{S_i\}_{i=1}^\infty$ be a sequence of subsets of $\mathbb{F}_q$ and let $\{T_i\}_{i=1}^\infty$ be the sequence of associated matrices. Then $\{T_i\}_{i=1}^\infty$ polarizes any SOF channel and a polar code $C_{\mathcal{A}_m}$ given in Definition~\ref{19.12.07} is a polar decreasing monomial-Cartesian code.
\end{theorem}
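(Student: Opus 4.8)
The plan is to decompose the statement into its two assertions and dispatch each by assembling results already established in this section. The first assertion—that $\{T_i\}_{i=1}^\infty$ polarizes any SOF channel—follows immediately from the corollary on SOF channels following Theorem~\ref{19.12.01}, which guarantees polarization of every SOF channel whenever each member of the sequence of invertible matrices admits a non-identity standard form. So the only thing I would verify is that each associated matrix $T_i=T(S_i)$ meets that hypothesis, and this was recorded when $T(S)$ was introduced: $T(S)$ is invertible (it is a Vandermonde-type matrix on the distinct elements of $S_i$) and possesses a non-identity standard form. Hence the corollary applies verbatim to $\{T_i\}_{i=1}^\infty$.

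For the second assertion I would first make precise the identification of $C_{\mathcal{A}_m}$ with a monomial-Cartesian code. Recall that $G_m=B_m(T_1\otimes\cdots\otimes T_m)$ has as its rows exactly the evaluations $ev_{\mathcal{S}}(M)$ as $M$ ranges over $\mathcal{M}=\{x_1^{a_1}\cdots x_m^{a_m}:a_i\le n_i-1\}$, under the correspondence $Row_i G_m = ev_{\mathcal{S}}(M)$ used to define $I(M)$ and $Z(M)$. Consequently the information set $\mathcal{A}_m$ may be read as a subset of $\mathcal{M}$, and $C_{\mathcal{A}_m}$ is by definition the span of the vectors $ev_{\mathcal{S}}(M)$ with $M\in\mathcal{A}_m$; that is, $C_{\mathcal{A}_m}=ev_{\mathcal{S}}(L(\mathcal{A}_m))=C(\mathcal{S},\mathcal{A}_m)$.

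It then remains to check that $\mathcal{A}_m$, viewed as a set of monomials, is closed under $\trianglelefteq$. This is exactly the content of the Proposition preceding the theorem: if $M\in\mathcal{A}_m$ and $M'\trianglelefteq M$, then $M'\in\mathcal{A}_m$. Since clause (i) of Definition~\ref{19.12.05} gives $M'\mid M\Rightarrow M'\trianglelefteq M$, closure under $\trianglelefteq$ forces closure under divisibility, so $\mathcal{A}_m$ is in particular a decreasing monomial set; together with the degree bound $a_i\le n_i-1$ (which makes $ev_{\mathcal{S}}$ injective) this certifies $C(\mathcal{S},\mathcal{A}_m)$ as a decreasing monomial-Cartesian code whose defining monomial set is $\trianglelefteq$-closed, which is precisely a polar decreasing monomial-Cartesian code.

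The genuine mathematical work is not in this assembly but in the inputs it cites. The main obstacle, already discharged in the Proposition, is establishing the $\trianglelefteq$-closure of an optimal information set; this rests on the Lemma quoted from \cite{Vardohus}, namely the monotonicity $M\trianglelefteq M'\Rightarrow Z(M)\le Z(M')$, together with the fact that $\mathcal{A}_m$ collects the indices of smallest Bhattacharyya parameter. A point I would be careful to address is the non-uniqueness of $\mathcal{A}_m$ when ties occur at the threshold: one must ensure the chosen information set is compatible with the $\trianglelefteq$-order, which is exactly what the Proposition (via the monotonicity of $Z$) secures, so no additional argument is needed beyond invoking it.
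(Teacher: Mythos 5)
Your proposal is correct and follows essentially the same route as the paper, which states Theorem~\ref{19.12.04} as an immediate consequence of the preceding results: the corollary that matrices with non-identity standard forms polarize SOF channels (applied to the Vandermonde-type matrices $T(S_i)$), the identification of the rows of $G_m$ with the evaluations $\operatorname{ev}_{\mathcal{S}}(M)$, and the proposition that $\mathcal{A}_m$ is closed under $\trianglelefteq$. Your write-up merely makes this assembly explicit (including the observation that clause (i) of Definition~\ref{19.12.05} forces closure under divisibility), which matches the paper's intent exactly.
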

In \cite{isa}, the authors analyzed through a different order the information set for polar codes constructed with $G_A$.We could find a set of monomials $\mathcal{M}'$ such that 
\[\mathcal{A}_n=\{M\ |\ M\trianglelefteq M',\ M'\in\mathcal{M}'\}.\]
If we choose $\mathcal{M}'$ to be minimal, then we can called it a generating set of $\mathcal{A}_n$ as in \cite{isa}. However, since $\trianglelefteq$ considers more than just the divisibility, if $\mathcal{B}(\mathcal{A}_n)$ is the generating set in the sense of Definition \ref{gset}, $\mathcal{B}(\mathcal{A}_n)$ could be bigger than $\mathcal{M}'$. For example, consider $S_1=S_3=\{0,1,2\}\subset\mathbb{F}_5$ and $S_2=\mathbb{F}_5$. If we take
\[\mathcal{A}_3=\{x_2^2x_3,x_2x_3,x_3,x_2^2,x_2,x_1,1\},\]
a minimal basis respect to $\trianglelefteq$ is $\{x_2^2x_3\}$, but $\mathcal{B}(\mathcal{A}_3)=\{x_2^2x_3,x_1\}$.

\section{Conclusion} \label{conclusion}
In this paper we prove that if a sequence of invertible matrices $\{T_i\}_{i=1}^\infty$ over an arbitrary field $\mathbb{F}_q$ has the property that every $T_i$ has a non identity standard form, then the sequence $\{T_i\}_{i=1}^\infty$ polarizes any symmetric over the field channel (SOF channel) $W$. Given a sequence $\{T_i\}_{i=1}^\infty$ that polarizes, and a natural number $m,$ we define a polar code as the space generated by some rows of the matrix $G_m,$ where $G_m$ is defined inductively taking $G_1=T_1$ and for $m \geq 2$,
\[G_m=\begin{bmatrix}
G_{m-1}\otimes Row_1T_m\\
G_{m-1}\otimes Row_2T_m\\
\vdots\\
G_{m-1}\otimes Row_{l_m}T_m\end{bmatrix}.\]
Given a set of monomials $\mathcal{M}$ that is closed under divisibility and a Cartesian product $\mathcal{S},$ we used the theory of evaluation codes to study decreasing monomial-Cartesian codes, which are defined by evaluating the monomials of $\mathcal{M}$ over the set $\mathcal{S}.$ We prove that the dual of a decreasing monomial-Cartesian code is a code of the same type. Then we describe its basic parameters in terms of the minimal generating set of $\mathcal{M}$. These codes are important because when the set $\mathcal{M}$ is also closed under the monomial order $\trianglelefteq,$ then the evaluation code is called polar decreasing monomial-Cartesian code. Strengthening the symmetry required of the channel and using matrices associated with subsets of a finite field $\mathbb{F}_q$, we prove that families of polar codes with multiple kernels can be viewed as decreasing monomial-Cartesian codes and therefore any information set $\mathcal{A}_n$ can be described in a similar way, offering an unified treatment for this kind of codes. 

\section*{Acknowledgments}
The first and fourth author are partially supported by SIP-IPN, project 20195717, and CONACyT. The third author is partially supported by NSF DMS-1855136.

\end{document}